\newtheorem{definition}{Definition}
\newtheorem{proposition}{Proposition}
\begin{document}

\title{Sequential Monte Carlo for fractional Stochastic Volatility Models}
\author{Alexandra Chronopoulou \thanks{Research of A.C. supported in part by a start-up fund from
the University of Illinois at Urbana-Champaign and by the Simons Foundation.}\hspace{.2cm}\\
Department of Industrial \& Enterprise Systems Engineering\\ University of Illinois at Urbana-Champaign\\
Konstantinos Spiliopoulos \thanks{Research of K.S. supported in part by a start-up fund from Boston University and by the National Science Foundation (DMS 1550918).}\\
Department of Mathematics and Statistics\\ Boston University \\
}

\maketitle

\bigskip
\begin{abstract}
In this paper we consider a fractional stochastic volatility model, that is a model in which the volatility may exhibit a long-range dependent or a rough/antipersistent behavior. We propose a dynamic sequential Monte Carlo methodology that is applicable to  both long memory and antipersistent processes in order to estimate   the volatility  as well as the unknown parameters of the model. We establish a central limit theorem for the state and parameter filters and we study  asymptotic properties (consistency and asymptotic normality) for the filter. We illustrate our results with a simulation study and we apply our method to estimating the volatility and the parameters of a long-range dependent model for S\& P 500 data.
\end{abstract}

\noindent
{\it Keywords:}  long memory stochastic volatility, rough stochastic volatility, parameter estimation, particle filtering.

\section{Introduction}

Empirical studies show that the volatility may exhibit correlations of the squared log returns that decay at a hyperbolic rate, instead of an exponential rate as the lag increases, see for example \cite{BCL}, \cite{DGE}. This slow decay cannot  be explained by a GARCH stochastic volatility model or by a stochastic volatility model with jumps. In the literature, this behavior has been described by a class of models that exhibit long-range dependence in the volatility. In contrast to the models where the dependence is introduced in the stock returns, \cite{rogers}, the standard assumption of absence of arbitrage opportunities is preserved.  The first long memory stochastic volatility (LMSV) model was introduced in discrete time by \cite{BCL} and \cite{Ha}. In particular, the authors model the stock returns by:
\begin{equation*}
\begin{cases}
&r_t = \sigma_t\\
&\sigma_t = \sigma \exp\left\{ u_t\right\}, \text{ where }
(1-B)^{d} u_t = \eta_t
\end{cases}
\end{equation*}
where $\sigma>0$, $\epsilon_t$  are independent and identically distributed (iid) $N(0,1)$, $\eta_t$ are iid $N(0, \sigma_{\eta})^{2}$ and independent with $\epsilon_t$ and $d\in(0,1/2)$ is the memory parameter. The feature of long-memory here stems from the fractional difference $(1-B)^d$, where $B$ denotes the lag operator, i.e. $BX_t=X_{t-1}$.

The continuous analogue of the LMSV model was introduced by Comte and Renault in  \cite{comte}  as a continuous-time mean
reverting process in the Hull-White setting. More specifically, their model resembles a classical stochastic volatility model  in which the log returns follow a Geometric Brownian Motion , however, the volatility process is described by a fractional Ornstein-Uhlenbeck process; that is the standard Ornstein-Uhlenbeck process where the Brownian motion is replaced by a fractional Brownian motion.

A large number of recent papers have considered modeling  the volatility in terms of long-range dependent or antipersistent processes. In \cite{comte2}, the authors propose an affine version of the  long memory  model in \cite{comte} for option pricing and they argue that long-range dependence provides an explanation for observations of non-flat term structure in long maturities. In \cite{chrono1}, \cite{chrono2}, the authors propose a pricing tree algorithm in order to compute option prices in the discrete and continuous time frameworks. They also propose a calibration method for determining the memory parameter.  In \cite{bayer}, \cite{gatheral}, the authors propose a rough fractional stochastic volatility model, which is used to explain strong skews or smiles in the implied volatility surface for very short time maturities. In a more recent paper, \cite{garnier}, the authors discuss the case of small volatility fluctuations of both short and long memory models and their impact on the implied volatility and derive a corrected Black-Scholes formula.

In this article, we address two problems: filtering of the unobserved volatility and  parameter estimation  in the case of  a stochastic volatility model that is either rough or antipersistent.
Specifically, we adapt a Sequential Monte Carlo algorithm in the non-Markovian framework and then we estimate the parameters online, following an idea initially introduced by  \cite{liuwest}.

Sequential Monte Carlo (SMC) methods, also known as particle filters or recursive Monte Carlo filters,  arose from the seminal work of Gordon, Salmond and Smith (1993), \cite{gordon}. SMC methods are iterative algorithms that are based on the dynamic evolution and update of discrete sets of sampled state vectors, that are referred to as particles, and are associated with properly selected weights.  
For an introduction to the SMC literature, there are several  reviews and tutorials, including, the edited volume by Doucet, Freitas and Gordon (2001), \cite{doucet1} and Kantas et al. (2010), \cite{kantas}.

In order to keep the adaptive nature of  the SMC algorithm, we incorporate the parameter estimation procedure  in the algorithm by treating the parameters as artificially random, sampled from a kernel density. Generally speaking, the approach relies on augmenting the unobserved state by considering the parameter as an unobserved state.
 We do not assume that the unknown parameter is fully random, and thus we do not have an additional MCMC step in the algorithm. We emphasize here that this is an important point, due to the heavy computational overhead caused by the long-memory issue.  Also, as it is well known, using SMC for maximum likelihood based methods for estimating parameters like the Hurst parameter is difficult.

 The output of the algorithm is  a provably asymptotically  consistent estimator along with the corresponding variance.  One of the appeals of the proposed method is the simplicity in its application, which becomes in particularly important due to the memory issue. However, we do mention here that overdispersion due to the artificial random evolution of the parameter is an issue that comes up with such methods, see for example \cite{chopinsmc2}. But, as we shall see, the overdispersion that the artificial random evolution of the parameter introduces, can be minimized by properly tuning the parameters of the mixture distribution used for sampling the parameter.  Related works using different methods can be found in \cite{BeskosDureauKalogeropoulos2013,LysyPillai2013}.

The study of the asymptotic properties of SMC methods and the asymptotic properties of the filter  as the number of particle increases is a quite hard problem. A form of consistency of the filter, when the number of particles tends to infinity is a common result in the majority of the literature, while central limit theorem type of results are fewer. One can refer, for example, to the work of Del Moral and Guionnet (1999), \cite{delmoral1}.
 To the best of our knowledge, the most general results in the literature can be found in Chopin (2004), \cite{chopin}, Douc and Moulines (2008), \cite{doucMoulines}, and K\"{u}nsch (2005), \cite{kunsch}. In these works and under slightly different conditions, law of large numbers and central limit theorems are derived that apply to most sequential Monte Carlo techniques in the literature.

The rest of the paper is organized as follows: In Section 2, we introduce the mathematical formulation of the problem. In Section 3,  we introduce the SISR (Sequential Importance Sampling with Resampling) method with parameter learning and we present the theoretical results for the proposed parameter estimators. In Section 4, we study the performance of both methods using simulated data. In Section 5, we  apply our method in estimating the unobserved volatility of a discrete-time stochastic volatility model with long-range dependence for S\& P 500 data. Finally, we summarize our results in  Section 6.

\section{Mathematical Framework}

Consider a state-space model in which the state vector  is denoted by $\{X_t\}_{t\geq 1}$ and the observations  $\{Y_t\}_{t \geq 1}$ are obtained sequentially in time. In addition, we assume that the state vector depends on an unknown, but fixed, parameter vector that we denote by $\theta$. In the sequel, we use the notation $X$ or $Y$ for the random variables and  $x$ or $y$ for  the corresponding realized values.

Unlike other models in the literature, we do not assume that the state vector  $\{X_t\}_{t\geq 1}$ is a Markovian process. Instead, we consider the case in which the unobserved process is not necessarily Markovian, with particular interest in the  long-range dependent case. Formally,  long-range dependence or long-memory is  defined as follows:
\begin{definition}
For a stationary process $\{X_t\}_{t\geq 1}$, there exists a parameter (Hurst index) $H\in(\frac{1}{2},1)$, such that
\begin{equation}\label{eq:LRD}
\lim_{t\rightarrow \infty} \frac{Corr(X_t, X_1)}{ct^{2-2H}} = 1,
\end{equation}
where $\rho(h) := Corr(X_{t}, X_{t-h})$ is the autocorrelation function of the process.
\end{definition}
 When $H=\frac{1}{2}$, then the process is Markovian, so this is a generalization of the models that are treated in the SMC literature. Equivalently, long-range dependence  implies that the autocorrelation function $\rho(h)$ of a long-range dependent process is non-summable, that is  $\sum_{h=1}^{\infty} \rho(h) = \infty$. If the auto-correlation function is summable, then the process has what is called antipersistence, in which case $H\in(0, \frac{1}{2})$.

Formally, at time $t$, the state-space model is specified by the \textit{observation equation} that is determined by the observation density
\[ p(y_t|x_t; \theta)\]
and the \textit{state equation} given by the conditional density
\[p(x_t| x_{t-1},\ldots, x_1;\theta),\]
where $\theta \in \Theta$ is an unknown vector of parameters, and $\Theta \subset \mathbb{R}^{d}$  is  open.  We assume that  the observations $\{Y_t\}_{t\geq 1}$ are conditionally independent given $\{X_t\}_{t\geq 1}$ and  that the long-range dependent process $\{X_t\}_{t\geq 1}$ has known initial density  $X_1 \sim \mu (x; \theta)$.

In this article, our goal is to use simulation for online filtering. In other words, we want to learn about the current state $X_t$ given  available information up to time $t$, which reduces to estimating the probability distribution function
\[p(x_t|y_{1},\ldots,y_{t}; \theta), \;\; t=1, \ldots,n,\]
where $\{y_t\}_{t\geq 1}$ are the observations up to time $t$. However, since we assume that the parameter $\theta$ is unknown, at the same time we also want to estimate $\theta$.

\section{Sequential Monte Carlo Filtering}

As we mentioned above,  apart from filtering for the unobserved states we also want to estimate the unknown parameter vector $\theta$ on which the state vector depends. Our approach will be to consider $\theta$ as an additional state and thus our goal will be to estimate the posterior distribution  $\{p(x_{1:t}, \theta|y_{1:t})\}_{t\geq 1}$ given by
\begin{align}\label{eq:postD_theta}
p(x_{1:t}; \theta|y_{1:t}) \;\;&\propto \;\;p(x_{1:t}, y_{1:t}, \theta) \nonumber \\
&\propto \;\; p(x_{1})\cdot  p(x_2|x_1; \theta) \cdot \ldots \cdot p(x_{n}|x_{n-1}, \ldots, x_1; \theta) \cdot \prod_{i=1}^{t} p(y_{i}|x_{i}; \theta) \cdot p(\theta),
\end{align}
where $p(\theta|y_{t})$ is a prior density for the parameter vector $\theta$, see Subsection \ref{SS:OnLineEstimation}. If the parameter is known, then the density is degenerate. Therefore, the additional difficulty here is that we need to compute or approximate the theoretical density function $p(\theta|y_t)$.

\subsection{On-line Parameter Estimation}\label{SS:OnLineEstimation}

One approach in the literature (\cite{flury}, \cite{gordon}, \cite{west1}, \cite{liuwest}), is to consider that $\theta$ is not fixed and assume that it artificially evolves in time, for example
\[\theta_t = \theta_{t-1} + e_t\]
where $e_t$ is an artificial white noise with decreasing variance. Then, at each time $t$,  $p(\theta|y_t)$ will be updated inside the SISR algorithm in order to incorporate the additional information that is obtained.

As it was discussed in \cite{liuwest}, this approach leads to artificial variance inflation, since the parameter is not truly random.  However, the  use of  a kernel density estimate with shrinkage correction can control this artificial over-dispersion.

More specifically, standing at time $t$,  we  approximate $p(\theta|y_{t})$ by a set of samples $\theta_{(t)}^{(j)}$ and weights $\omega_{t}^{(j)}$ using a discrete Monte Carlo. The index $t$ in $\theta$ is in parenthesis to
indicate that $\theta$ does not evolve in time, but that its value  is drawn from the posterior density $p(\theta|y_t)$ at time $t$.
Then, the smooth kernel density with shrinkage correction will be of the form
\[p(\theta|y_t) \approx \sum_{j=1}^{N} w_{t}^{(j)} \mathcal{N}\left(\theta|m_{t}^{(j)}, h^2 V_t\right),\]
where $\mathcal{N}(\cdot|m,S)$ denotes a multivariate Normal density with mean $m$ and variance $S$. So, essentially,  $p(\theta|y_t) $ is approximated by a mixture of normals with  mean $m_t^{(j)}$ and variance $h^2 V_t$, weighted by sample weights  $\omega_t^{(j)}$.  The kernel location is specified by
\[
m_t^{(j)} = \alpha \; \theta_{t}^{(j)} + (1-\alpha) \bar{\theta}_{t},
\]
where $\alpha = \sqrt{1-h^2}$ and $\bar{\theta}_{t}$ denotes the average over all parameter samples (essentially a sum over $i$). Regarding $h$, a typical choice would be a decreasing  function of the sample size, but if one wants to control the loss of information then  $h^2 = q - ((3\delta - 1)/2\delta)^2$, where $\delta$ is a discount factor typically around $0.95-0.99$ and $\alpha$  becomes $\alpha = (3\delta-1)/2\delta$.

Therefore, the SISR algorithm is adjusted in order to incorporate the update of $\theta$. The key idea of our approach that also allows us to establish asymptotic consistency and normality of the estimators, is to re-formulate the weights so that they represent the joint posterior $p\left(x_t^{(j)}, \theta_{(t)}^{(i)}|y_{1:t}\right)$  and update $\theta$ along with the state vector $X$.

Before presenting the algorithm, let us define the un-normalized weight functions. For $i=1, \ldots, N$, and $t=1$ set
\begin{align}
w_1\left( \tilde{X}_{1,1}^{(i)};\; \theta_{(1)}^{(i)}\right) &=
\frac{p\left(X_{1,1}^{(i)}, \theta_{(1)}^{(i)}\bigl| \;y_{1}\right)}{q_{1}\left(\tilde{X}_{1,1}^{(i)}\right)p_{1}\left (\theta_{(1)}^{(i)} \right)}  \propto  \frac{p_{1} \left (\theta_{(1)}^{(i)} \right) \mu\left(X_{1,1}^{(i)}| \theta_{(1)}^{(i)}\right)  \;p\left(y_{1} \bigl|\tilde{X}_{1,1}^{(i)}, \theta_{(1)}^{(i)}\right) }
{q_{1}\left(\tilde{X}_{1,1}^{(i)}\right)p_{1}\left (\theta_{(1)}^{(i)} \right)}\nonumber\\
&=  \frac{ \mu\left(X_{1,1}^{(i)}| \theta_{(1)}^{(i)}\right)  \;p\left(y_{1} \bigl|\tilde{X}_{1,1}^{(i)}, \theta_{(1)}^{(i)}\right) }
{q_{1}\left(\tilde{X}_{1,1}^{(i)}\right)}.\label{Eq:WeightFcnT1}
\end{align}
Generally for $t\geq 1$
{\small
\begin{align}
w_t^{(i)}&=w_t\left(X^{(i)}_{1:t-1,t-1},  \tilde{X}_{t,t}^{(i)}; \;\theta_{(t)}^{(i)} \right)\nonumber\\
 &= \frac{p\left( X^{(i)}_{1:t-1,t-1}, \tilde{X}_{t,t}^{(i)}, \theta_{(t)}^{(i)}  \bigl| y_{1:t} \right)} {p\left(X^{(i)}_{1:t-1,t-1} | y_{1:t-1} ; \theta_{(t)}^{(i)} \right) \; q_t\left(  \tilde{X}_{t,t}^{(i)}, \theta_{(t)}^{(i)}| X^{(i)}_{1:t-1,t-1}\right)} \nonumber\\
 &  = \frac{p\left(y_t|X^{(i)}_{1:t-1,t-1},\tilde{X}_{t,t}^{(i)},   \theta_{(t)}^{(i)} \right ) \; p\left(\tilde{X}_{t,t}^{(i)} |X^{(i)}_{1:t-1,t-1},  y_{1:t-1}, \theta_{(t)}^{(i)} \right) p\left(X^{(i)}_{1:t-1,t-1} | y_{1:t-1},  \theta_{(t)}^{(i)} \right) p\left(  \theta_{(t)}^{(i)}|y_{1:t-1},\right) } { p\left(X^{(i)}_{1:t-1,t-1} | y_{1:t-1} ; \theta_{(t)}^{(i)} \right) q_t\left( \tilde{X}_{t,t}^{(i)}\bigl | X^{(i)}_{1:t-1,t-1},  \theta_{(t)}^{(i)}\right)p\left(  \theta_{(t)}^{(i)}\right)}\nonumber\\
 &  = \frac{p\left(y_t|X^{(i)}_{1:t-1,t-1},\tilde{X}_{t,t}^{(i)},   \theta_{(t)}^{(i)} \right ) \; p\left(\tilde{X}_{t,t}^{(i)} |X^{(i)}_{1:t-1,t-1},  y_{1:t-1}, \theta_{(t)}^{(i)} \right) p\left(X^{(i)}_{1:t-1,t-1} | y_{1:t-1},  \theta_{(t)}^{(i)} \right)  } { p\left(X^{(i)}_{1:t-1,t-1} | y_{1:t-1} ; \theta_{(t)}^{(i)} \right) q_t\left( \tilde{X}_{t,t}^{(i)}\bigl | X^{(i)}_{1:t-1,t-1},  \theta_{(t)}^{(i)}\right)}.\label{Eq:WeightFcnT}
\end{align}
}

Then, the algorithm is given by:

\noindent \textbf{ At time $t=1$}
\begin{enumerate}
\item[(a)] \textit{Sampling}\\
For $i=1, \ldots, N$, sample $\tilde{X}_{1,1}^{(i)}    \sim \,\, q_{1}(\cdot)$ and $\theta_{(1)}^{(i)} \sim p_{1}(\cdot)$.

\item[(b)] \textit{Re-Sampling }\\
For $i=1, \ldots, N$, let $w_1\left( \tilde{X}_{1,1}^{(i)};\; \theta_{(1)}^{(i)}\right)$ be defined by (\ref{Eq:WeightFcnT1}),
 normalize $W_{1}^{(i)} = \frac{w_1^{(i)}}{\sum_{i=1}^{N} w_1^{(i)}}$, such that $\sum_{i=1}^{N} W_{1}^{(i)}=1$ and re-sample
\[X_{1,1}^{(i)} \sim \sum_{j=1}^{N} W_{1}^{(j)} \delta_{\tilde{X}_{(1,1)}^{(j)}}\left( dx_1\right).\]
\end{enumerate}

\noindent \textbf{ At time $t, t\geq 2$ (step $t-1 \rightarrow t$)}
\begin{enumerate}
\item[(a)] \textit{Sampling} \\
For $i=1, \ldots, N$,  set
\begin{align*}
\tilde{X}^{(i)}_{t,1:t-1} &= X^{(i)}_{t-1,1:t-1}\\
m_{t-1} &= \alpha \theta_{(t-1)}^{(i)} + (1-\alpha) \bar{\theta}_{(t-1)}
\end{align*}
where $\bar{\theta}_{(t-1)} =  \sum_{i=1}^{N} W_{t-1}^{(i)} \theta_{(t-1)}^{(i)}$,  sample
\[\theta_{(t)}^{(i)} \sim \mathcal{N}(m_{t-1}^{(i)} | h^{2} V_{t-1}), \]
and
\[
\tilde{X}_{t,t}^{(i)} \sim q_{t} \left( \cdot \bigl| X^{(i)}_{1:t-1,t-1}; \theta_{(t)}^{(i)}\right)
\]
where $V_{t-1} = \frac{1}{N-1} \sum_{i=1}^{N}\left( W_{t-1}^{(i)} \theta_{(t-1)}^{(i)} - \bar{\theta}_{(t-1)} \right)^2$ .\\

\item[(b)] \textit{Re-Sampling}\\
For $i=1, \ldots, N$, let $w_t^{(i)}=w_t\left(X^{(i)}_{1:t-1,t-1},  \tilde{X}_{t,t}^{(i)}; \;\theta_{(t)}^{(i)} \right)$ be defined by (\ref{Eq:WeightFcnT}),
and normalize $W_{t}^{(i)} = \frac{w_t^{(i)}}{\sum_{i=1}^{N} w_t^{(i)}}$, such that $\sum_{i=1}^{N} W_{t}^{(i)}=1$.\\

For $i=1, \ldots, N$, re-sample
\[X_{1:t,t} \sim \pi^{N}_{t}(dx_{1:t}), \quad\text{where}\quad\pi^{N}_{t}(dx_{1:t})=\sum_{j=1}^{N} W_{t}^{(j)} \delta_{X^{(j)}_{1:t-1,t-1}, \tilde{X}_{t,t}^{(j)}} (dx_{1:t}).
\]
and set
\[
 \bar{\theta}_{(t)} =  \sum_{i=1}^{N} W_{t}^{(i)} \theta_{(t)}^{(i)}
\]
\end{enumerate}

\textbf{Output}
The filtering distribution $p(dx_{1:t}|y_{1:t})$ is approximated by
\[
\pi^{N}(dx_{1:t})=\sum_{j=1}^{N} W_{t}^{(j)} \delta_{X^{(j)}_{1:t-1,t-1}, \tilde{X}_{t,t}^{(j)}} (dx_{1:t}), \quad \text{or} \quad
\tilde{\pi}^{N}(dx_{1:t})=\frac{1}{N}\sum_{j=1}^{N} \delta_{X^{(j)}_{1:t,t}}(dx_{1:t}).
\]
and the estimator for $\theta$ is $\bar{\theta}_{(t)}$. We also record the approximation for the combined distribution
$\pi^{\theta}(dx_{1:t}, d\theta_{(t)})=p(dx_{1:t}, d\theta_{(t)}|y_{1:t})$ which is approximated by
\[
\pi^{N,\theta}(dx_{1:t}, d\theta_{(t)})=\sum_{j=1}^{N} W_{t}^{(j)} \delta_{X^{(j)}_{1:t-1,t-1}, \tilde{X}_{t,t}^{(j)}, \theta_{(t)}^{(j)}} (dx_{1:t}d\theta_{(t)}).
\]

\subsection{Convergence Results}

Let us now study the convergence properties of this algorithm.
Let $\phi:\mathcal{X}\times\Theta\mapsto\mathbb{R}$ be an appropriate test function. Notice now that the SISR algorithm provides us with the estimator
\[
\hat{\phi}^{N}_{t}=\int\phi(x_{1:t}, \theta_{(t)})\pi^{N,\theta}(dx_{1:t}d\theta_{(t)})=\sum_{i=1}^{N}W_{t}^{i}\phi\left(X^{(i)}_{1:t-1,t-1}, \tilde{X}_{t,t}^{(i)},\theta_{(t)}^{(i)}\right).
\]
It is relatively straightforward to see that  $\hat{\phi}^{N}_{t}$ is estimating
\[
\bar{\phi}_{t}=\int\phi(x_{1:t},\theta_{(t)})p(x_{1:t}, \theta_{(t)}|Y_{1:t})dx_{1:t}d\theta_{(t)}
\]
where $\theta_{(t)}$ is the value that is drawn from the posterior density $p(\theta|y_{t})$ at time $t$.

So, it is natural to quantify the performance of the algorithm by studying the convergence of $\hat{\phi}^{N}_{t}$ to $\bar{\phi}_{t}$ as $N\rightarrow\infty$.
We define the set of appropriate test functions under which a central limit theorem can be established, appropriately formulated for our case of interest:
\begin{align}
 \Phi_{t}&=\left\{\phi:\mathcal{X}\mapsto\mathbb{R} \text{ measurable}:\text{ there exists }\delta>0 \text{ such that }
 \mathbb{E}_{\pi_{t}}\left\Vert W_{t}\phi\right\Vert^{2+\delta}<\infty,\right.\nonumber\\
 &\hspace{1cm}\left.\text{ and }x_{1:(t-1)}\mapsto \mathbb{E}_{\bar{\rho}(x_{1:(t-1),\cdot})}(W_{t}\phi)^{2+\delta}\text{ is in }\Phi_{t-1} \right\}.\label{Eq:AdmissibleTestFunctions}
\end{align}
Following the proof of  the central limit theorem results of  \cite{chopin,johansen1} for the Markovian case, without the parameter estimation aspect,
the following result is derived.

\begin{proposition}\label{P:CLT_longRange_WithParameter}
Let us assume that there exists $\delta>0$ such that for every $t<\infty$  $\mathbb{E}_{\pi^{\theta}_{t}}\left\Vert W_{t}\right\Vert^{2+\delta}<\infty$ and consider
$\phi\in\Phi_{t}$.
Then, we get
\begin{equation*}
\sqrt{N}\left(\hat{\phi}^{N}_{t}-\bar{\phi}_{t}\right)\Rightarrow \mathcal{N}\left(0,\sigma^{2}_{t}(\phi)\right)
\end{equation*}
as $N\rightarrow\infty$, where at time $t=1$
\begin{align*}
\sigma^{2}_{1}(\phi)&=\int\frac{p^{2}(x_{1},\theta_{(1)}|y_{1})}{q_{1}(x_{1})p_{1}\left(\theta_{(1)}\right)}\left(\phi(x_{1},\theta_{(1)})-\bar{\phi}_{1}\right)^{2}dx_{1}d\theta_{(1)}
\end{align*}
and for $t>1$
\begin{align*}
&\sigma^{2}_{t}(\phi)=\int\frac{p^{2}(x_{1},\theta_{(1)}|y_{1:t})}{q_{1}(x_{1})p_{1}\left(\theta_{(1)}\right)} \left(\int \phi(x_{1:t},\theta_{(t)})p(x_{2:t},\theta_{(t)}|y_{2:t},x_{1})dx_{2:t}d\theta_{(t)}-\bar{\phi}_{t}\right)^{2}dx_{1}d\theta_{(1)}\nonumber\\
&+\sum_{k=2}^{t-1}\int \frac{p^{2}(x_{1:k},\theta_{(k)}|y_{1:t})}{p(x_{1:(k-1)}|y_{1:(k-1)};\theta_{(k)}) q_{k}(x_{k},\theta_{(k)}|x_{1:(k-1)})}\times\nonumber\\
&\qquad\times\left(\int \phi(x_{1:t},\theta_{(t)})p(x_{(k+1):t},\theta_{(t)}|y_{(k+1):t},x_{1:k})dx_{(k+1):t}d\theta_{(t)}-\bar{\phi}_{t}\right)^{2}dx_{1:k}d\theta_{(k)}\nonumber\\
&+\int \frac{p^{2}(x_{1:t},\theta_{(t)}|y_{1:t})}{p(x_{1:(t-1)}|y_{1:(t-1)};\theta_{(t)}) q_{t}(x_{t},\theta_{(t)}|x_{1:(t-1)})}\left( \phi(x_{1:t},\theta_{(t)})-\bar{\phi}_{t}\right)^{2}dx_{1:t}d\theta_{(t)}.\nonumber
\end{align*}
\end{proposition}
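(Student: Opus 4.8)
The plan is to argue by induction on $t$, following the standard decomposition of the sequential Monte Carlo error into a mutation (importance sampling) contribution and a propagated resampling contribution, as in \cite{chopin,johansen1}, but carrying the augmented state $(x_{1:t},\theta_{(t)})$ and paying attention to the fact that the proposal kernel for $\theta_{(t)}$ is itself estimated from the particle cloud at time $t-1$. For the base case $t=1$, the pairs $(\tilde{X}_{1,1}^{(i)},\theta_{(1)}^{(i)})$ are i.i.d.\ draws from $q_{1}(\cdot)p_{1}(\cdot)$, the weights are given by \eqref{Eq:WeightFcnT1}, and $\hat{\phi}^{N}_{1}$ is a self-normalized importance sampling estimator of $\bar{\phi}_{1}$. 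Writing $\hat{\phi}^{N}_{1}$ as a ratio of two sample averages, applying the classical i.i.d.\ central limit theorem to the pair (numerator, denominator) --- valid under the stated $(2+\delta)$ moment bound via Lyapunov --- and linearizing the ratio (delta method) yields $\sqrt{N}(\hat{\phi}^{N}_{1}-\bar{\phi}_{1})\Rightarrow\mathcal{N}(0,\sigma^{2}_{1}(\phi))$ with the stated $\sigma^{2}_{1}(\phi)$.

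For the inductive step, assume the central limit theorem holds at time $t-1$ for every admissible test function in $\Phi_{t-1}$, together with the analogous statement for the resampled, equally weighted particle measure $\tilde{\pi}^{N}_{t-1}$. Condition on the $\sigma$-field $\mathcal{F}_{t-1}$ generated by the particle system after resampling at $t-1$. Given $\mathcal{F}_{t-1}$, the quantities $\bar{\theta}_{(t-1)}$ and $V_{t-1}$ are fixed, and the newly sampled $(\theta_{(t)}^{(i)},\tilde{X}_{t,t}^{(i)})$, $i=1,\dots,N$, are conditionally independent, with $\theta_{(t)}^{(i)}\sim\mathcal{N}(m_{t-1}^{(i)}\,|\,h^{2}V_{t-1})$ and $\tilde{X}_{t,t}^{(i)}\sim q_{t}(\cdot\,|\,X^{(i)}_{1:t-1,t-1};\theta_{(t)}^{(i)})$. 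Next, express $\bar{\phi}_{t}$ as a ratio of two integrals, against the time-$(t-1)$ filtering distribution $p(x_{1:t-1}|y_{1:t-1})$, of appropriately updated test functions; this uses the factorization of the joint density \eqref{eq:postD_theta} exactly as it appears in the weight \eqref{Eq:WeightFcnT}. Then decompose
\[
\sqrt{N}\bigl(\hat{\phi}^{N}_{t}-\bar{\phi}_{t}\bigr)=A_{N}+B_{N},
\]
where $A_{N}$ collects the fresh Monte Carlo error introduced by sampling $(\theta_{(t)}^{(i)},\tilde{X}_{t,t}^{(i)})$ at time $t$ and $B_{N}$ is the error inherited from the time-$(t-1)$ particle approximation, propagated through the mutation step.

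For $A_{N}$, a triangular-array central limit theorem conditional on $\mathcal{F}_{t-1}$ (again Lyapunov, using $\mathbb{E}_{\pi^{\theta}_{t}}\left\Vert W_{t}\right\Vert^{2+\delta}<\infty$) gives asymptotic normality with conditional variance that a priori depends on the random $m_{t-1}^{(i)}$ and $V_{t-1}$; one checks that this conditional variance converges in probability to the deterministic $k=t$ term of $\sigma^{2}_{t}(\phi)$, using the law of large numbers for the time-$(t-1)$ particle system, which gives $V_{t-1}\to V_{t-1}^{\infty}$ and $\bar{\theta}_{(t-1)}\to\bar{\theta}^{\infty}_{(t-1)}$, so that the proposal kernel stabilizes to the fixed kernel $q_{t}(x_{t},\theta_{(t)}|x_{1:(t-1)})$ appearing in the denominators of $\sigma^{2}_{t}(\phi)$. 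For $B_{N}$, the inductive hypothesis applied to the updated test function --- whose membership in $\Phi_{t-1}$ is guaranteed precisely by the recursive definition \eqref{Eq:AdmissibleTestFunctions} --- gives a Gaussian limit whose variance is the sum of the $k=1,\dots,t-1$ terms of $\sigma^{2}_{t}(\phi)$, the $k=1$ term carrying the initialization denominator $q_{1}(x_{1})p_{1}(\theta_{(1)})$ and the remaining terms carrying the mutation-plus-resampling denominators. Since, conditionally on $\mathcal{F}_{t-1}$, $A_{N}$ is asymptotically independent of $B_{N}$ (which is $\mathcal{F}_{t-1}$-measurable up to $o_{P}(1)$), adding the two limiting variances and applying Slutsky together with the ratio linearization for the self-normalized estimator produces the claimed $\sigma^{2}_{t}(\phi)$.

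The main obstacle is exactly this data-dependence of the parameter proposal: unlike in \cite{chopin,johansen1}, the mutation kernel for $\theta_{(t)}$ is not fixed in advance but is the random mixture $\sum_{j}W_{t-1}^{(j)}\mathcal{N}(m_{t-1}^{(j)}|h^{2}V_{t-1})$. Making the argument rigorous requires (i) a law of large numbers for $V_{t-1}$ and $\bar{\theta}_{(t-1)}$ so that the kernel parameters converge, (ii) control of their $O_{P}(N^{-1/2})$ fluctuations to show they do not contribute to the limiting variance, and (iii) a careful conditioning argument recovering the conditional-independence structure that drives the triangular-array central limit theorem despite the global coupling introduced by $\bar{\theta}_{(t-1)}$ and $V_{t-1}$. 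The remaining ingredients --- the ratio linearization, the Lyapunov verifications, and the bookkeeping matching each contribution to the corresponding term of $\sigma^{2}_{t}(\phi)$ --- are routine adaptations of the Markovian arguments in \cite{chopin,johansen1}.
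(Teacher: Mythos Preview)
Your approach is correct in spirit and, in fact, more careful than the paper's own proof. The paper does not carry out the induction at all: it simply invokes Proposition~A.1.1 of \cite{johansen1}, identifies the target and proposal measures as
\[
\pi^{\theta}_{t}(dx_{1:t},d\theta_{(t)})=p(dx_{1:t},d\theta_{(t)}\mid y_{1:t}),\qquad
\rho_{t}(dx_{1:t},d\theta_{(t)})=p(dx_{1:(t-1)}\mid y_{1:(t-1)})\,q_{t}(dx_{t},d\theta_{(t)}\mid x_{1:(t-1)}),
\]
observes that $W_{t}=d\pi^{\theta}_{t}/d\rho_{t}$, and reads off $\sigma^{2}_{t}(\phi)$ from the general variance formula
$V_{t}(\phi)=\sum_{k=1}^{t-1}\mathrm{Var}_{\rho_{k}}\bigl[W_{k}(\mathbb{E}_{\pi^{\theta}_{t}}[\phi\mid X_{1:k}]-\bar{\phi}_{t})\bigr]+\mathrm{Var}_{\rho_{t}}\bigl[W_{t}(\phi-\bar{\phi}_{t})\bigr]$.

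Your inductive splitting into the fresh mutation error $A_{N}$ and the propagated error $B_{N}$ is exactly the machinery that underlies that cited proposition, so at the structural level the two proofs coincide. Where you go beyond the paper is in isolating the one point the citation glosses over: the Liu--West proposal $\mathcal{N}(m_{t-1}^{(i)}\mid h^{2}V_{t-1})$ is not a fixed mutation kernel but depends on the entire time-$(t-1)$ particle cloud through $\bar{\theta}_{(t-1)}$ and $V_{t-1}$, whereas a literal application of \cite{chopin,johansen1} assumes a deterministic kernel $q_{t}$. The paper treats $q_{t}(x_{t},\theta_{(t)}\mid x_{1:(t-1)})$ as if it were fixed and never comments on this. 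Your items~(i)--(iii) --- the law of large numbers for $(\bar{\theta}_{(t-1)},V_{t-1})$, the negligibility of their $O_{P}(N^{-1/2})$ fluctuations, and the conditioning argument restoring conditional independence --- are precisely the additional ingredients needed to make the citation legitimate, so your sketch is in effect a rigorous completion of the paper's argument rather than a different proof.
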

\begin{proof}
The proof follows from Proposition A.1.1 of \cite{johansen1} after making the adequate identifications. Indeed, for a general sequential importance sampling algorithm with weights
$W_{t}$, Proposition A.1.1 of \cite{johansen1} implies that the formula for the variance in question is given by
\[
 V_{t}(\phi)=\sum_{k=1}^{t-1}\text{Var}_{\rho_{k}}\left[W_{k}\left(\mathbb{E}_{\pi^{\theta}_{t}}\left[\phi_{t}|X_{1:k}\right]-\bar{\phi}_{t}\right)\right]
 +\text{Var}_{\rho_{t}}\left[W_{t}(\phi(x_{1:t},\theta_{(t)})-\bar{\phi}_{t})\right].
\]

In our case we have
\begin{align}
\pi^{\theta}_{t}(dx_{1:t},d\theta_{(t)})&=p(dx_{1:t},d\theta_{(t)}|y_{1:t})\nonumber\\
\rho_{t}(dx_{1:t},d\theta_{(t)})&=p(dx_{1:(t-1)}|y_{1:(t-1)})q_{t}(dx_{t},d\theta_{(t)}|x_{1:(t-1)})  \nonumber
\end{align}
and the weights take the form
\begin{align*}
W_{t}(x_{1:t},\theta_{(t)})=\frac{d\pi^{\theta}_{t}}{d\rho_{t}}(x_{1:t},\theta_{(t)}) =\frac{p(x_{1:t},\theta_{(t)}|y_{1:t})}{p(x_{1:(t-1)}|y_{1:(t-1)})q_{t}(x_{t},\theta_{(t)}|x_{1:(t-1)})}
\end{align*}
Plugging these expressions in the formula for $V_{t}(\phi_{t})$ one immediately recovers the form of $\sigma^{2}(\phi_{t})$, completing the proof of the proposition.
\end{proof}

\subsection{Statistical properties of the parameter estimator}
Essentially, $\theta$ is viewed as an augmented state variable. Proposition \ref{P:CLT_longRange_WithParameter} quantifies the convergence of the filter,
but it does not discuss the
statistical properties of  $\bar{\theta}^{N}_{(t)}$. Let us recall that
\[
\bar{\theta}^{N}_{(t)} =  \sum_{i=1}^{N} W_{t}^{(i)} \theta_{(t)}^{(N,i)}
\]
where
\begin{align*}
\theta_{(t)}^{(N,i)} &\sim \mathcal{N}(m_{t-1}^{(N,i)} | h^{2} V_{t-1}^{N}), \nonumber \\
m_{t-1}^{(N,i)} &= \alpha \theta_{(t-1)}^{(N,i)} + (1-\alpha) \bar{\theta}_{(t-1)}^{N}\nonumber\\
V_{t-1}^{N}& = \frac{1}{N-1} \sum_{i=1}^{N}\left( W_{t-1}^{(i)} \theta_{(t-1)}^{(N,i)} - \bar{\theta}_{(t-1)}^{N} \right)^2
\end{align*}

By inspecting the algorithm it becomes clear that the convergence properties of $\bar{\theta}^{N}_{(t)}$ as $N\rightarrow\infty$ is described by a statement very
similar to that
of Proposition \ref{P:CLT_longRange_WithParameter} after making the choice $\phi(x_{1:t},\theta_{(t)})=\theta_{(t)}$. Proposition \ref{P:CLT_Parameter} shows that at time $t$, the estimator for $\theta$, $\bar{\theta}^{(N)}_{(t)}$ converges to $\bar{\theta}_{(t)}=\int\theta p(\theta_{(t)}|y_{1:t})d\theta$ as $N\rightarrow\infty$.
\begin{proposition}\label{P:CLT_Parameter}
Let us assume that there exists $\delta>0$ such that for every $t<\infty$  $\mathbb{E}_{\pi^{\theta}_{t}}\left\Vert W_{t}\right\Vert^{2+\delta}<\infty$ and consider the function identity
$\theta_{(t)}\mapsto \theta_{(t)}$, assuming that it belongs to the set of appropriate test functions $\Phi_{t}$ defined in (\ref{Eq:AdmissibleTestFunctions}).
Let us also define the mean of the posterior distribution $p(\theta_{(t)}|y_{1:t})$
\begin{equation*}
 \bar{\theta}_{(t)}=\int\theta p(\theta_{(t)}|Y_{1:t})d\theta.
\end{equation*}
Then, we get
\begin{equation*}
\sqrt{N}\left(\bar{\theta}^{(N)}_{(t)}-\bar{\theta}_{(t)}\right)\Rightarrow \mathcal{N}\left(0,\sigma^{2}_{t}(\theta)\right)
\end{equation*}
as $N\rightarrow\infty$. 
 The asymptotic variance $\sigma^{2}(\theta_{(t)})$ is defined as follows. At time $t=1$
\begin{align*}
\sigma^{2}_{1}(\theta)&=\int\frac{p^{2}(x_{1},\theta_{(1)}|y_{1})}{q_{1}(x_{1})p_{1}\left(\theta_{(1)}\right)}\left(\theta_{(1)}-\bar{\theta}_{(1)}\right)^{2}dx_{1}d\theta_{(1)}
\end{align*}
and for $t>1$
\begin{align}
&\sigma^{2}_{t}(\theta)=\int\frac{p^{2}(x_{1},\theta_{(1)}|y_{1:t})}{q_{1}(x_{1})p_{1}\left(\theta_{(1)}\right)} \left(\int \theta_{(t)}p(x_{2:t},\theta_{(t)}|y_{2:t},x_{1})dx_{2:t}d\theta_{(t)}-\bar{\theta}_{(t)}\right)^{2}dx_{1}d\theta_{(1)}\nonumber\\
&+\sum_{k=2}^{t-1}\int \frac{p^{2}(x_{1:k},\theta_{(k)}|y_{1:t})}{p(x_{1:(k-1)}|y_{1:(k-1)},\theta_{(k)}) q_{k}(x_{k},\theta_{(k)}|x_{1:(k-1)})}\times\nonumber\\
&\qquad\times\left(\int \theta_{(t)}p(x_{(k+1):t},\theta_{(t)}|y_{(k+1):t},x_{1:k})dx_{(k+1):t}d\theta_{(t)}-\bar{\theta}_{(t)}\right)^{2}dx_{1:k}d\theta_{(k)}\nonumber\\
&+\int \frac{p^{2}(x_{1:t},\theta_{(t)}|y_{1:t})}{p(x_{1:(t-1)}|y_{1:(t-1)},\theta_{(t)}) q_{t}(x_{t},\theta_{(t)}|x_{1:(t-1)})}\left( \theta_{(t)}(x_{1:t})-\bar{\theta}_{(t)}\right)^{2}dx_{1:t}d\theta_{(t)}.\label{Eq:VarianceEstimator}
\end{align}
\end{proposition}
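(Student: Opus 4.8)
The plan is to derive Proposition \ref{P:CLT_Parameter} as the specialization of Proposition \ref{P:CLT_longRange_WithParameter} to the coordinate test function $\phi(x_{1:t},\theta_{(t)})=\theta_{(t)}$. With this choice the SISR estimator of the previous subsection becomes $\hat{\phi}^{N}_{t}=\sum_{i=1}^{N}W_{t}^{(i)}\theta_{(t)}^{(N,i)}=\bar{\theta}^{(N)}_{(t)}$, while its target is $\bar{\phi}_{t}=\int\theta_{(t)}\,p(x_{1:t},\theta_{(t)}|y_{1:t})\,dx_{1:t}d\theta_{(t)}=\int\theta\,p(\theta_{(t)}|y_{1:t})\,d\theta=\bar{\theta}_{(t)}$. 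The two hypotheses of Proposition \ref{P:CLT_longRange_WithParameter} --- the moment bound $\mathbb{E}_{\pi^{\theta}_{t}}\|W_{t}\|^{2+\delta}<\infty$ and $\phi\in\Phi_{t}$ --- are exactly the ones assumed here. Hence, once the conclusion of Proposition \ref{P:CLT_longRange_WithParameter} is in force for this $\phi$, substituting $\phi(x_{1:t},\theta_{(t)})=\theta_{(t)}$ into the formula for $\sigma^{2}_{t}(\phi)$ yields precisely (\ref{Eq:VarianceEstimator}); the base case $t=1$ is immediate since there $\theta_{(1)}^{(i)}\sim p_{1}(\cdot)$ and $\tilde{X}_{1,1}^{(i)}\sim q_{1}(\cdot)$ are drawn from fixed densities, so the claim reduces to the classical importance-sampling central limit theorem.

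The one feature that prevents a purely verbatim appeal to Proposition \ref{P:CLT_longRange_WithParameter}, and which I would address carefully, is that the proposal used to sample $\theta_{(t)}^{(N,i)}$ is the Liu--West mixture $\sum_{j}W_{t-1}^{(j)}\mathcal{N}(\theta\,|\,m_{t-1}^{(j)},h^{2}V_{t-1}^{N})$, whose location parameters $m_{t-1}^{(j)}=\alpha\theta_{(t-1)}^{(N,j)}+(1-\alpha)\bar{\theta}^{N}_{(t-1)}$ and scale $V_{t-1}^{N}$ are themselves random and depend on $N$ through the weighted particle cloud produced at time $t-1$; this is not the fixed proposal kernel of the standard analysis. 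I would handle this by induction on $t$. Assuming the conclusion of Proposition \ref{P:CLT_Parameter} together with the accompanying law of large numbers (which follows from the same Chopin--Johansen argument) at time $t-1$, the empirical functionals that parametrize the mixture, namely $\bar{\theta}^{N}_{(t-1)}$ and $V_{t-1}^{N}$, converge in probability to deterministic limits as $N\to\infty$. Conditionally on the $\sigma$-field $\mathcal{F}^{N}_{t-1}$ generated by the time-$(t-1)$ particles and weights, the variables $\theta_{(t)}^{(N,i)}$ are independent draws from a proposal that is \emph{fixed} given $\mathcal{F}^{N}_{t-1}$, so the sequential importance sampling central limit theorem of \cite{chopin,johansen1} applies conditionally with this proposal in place of $q_{t}$; one then passes to the limit using continuity of the relevant functionals in those limits together with Slutsky's theorem. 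The moment bound $\mathbb{E}_{\pi^{\theta}_{t}}\|W_{t}\|^{2+\delta}<\infty$ supplies the uniform integrability that both upgrades the conditional statement to the unconditional weak limit and shows that the induced perturbations contribute only at order $o_{p}(N^{-1/2})$, hence do not affect the limiting Gaussian.

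Once the $N$-dependence of the $\theta$-proposal has been absorbed into the limiting kernel, the scheme is a genuine fixed-proposal sequential importance sampling algorithm on the augmented space $\mathcal{X}\times\Theta$, with proposal $q_{k}(x_{k},\theta_{(k)}|x_{1:(k-1)})$ equal to the product of the state proposal and the limiting Liu--West kernel in the $\theta$-coordinate, to which Proposition \ref{P:CLT_longRange_WithParameter} applies directly with $\phi(x_{1:t},\theta_{(t)})=\theta_{(t)}$; reading off the resulting variance gives (\ref{Eq:VarianceEstimator}). I expect the main obstacle to be exactly the step just described --- making rigorous the handoff from the random, $N$-dependent Liu--West proposal to its deterministic limit without disturbing the $\sqrt{N}$ normalization --- since everything else is a direct transcription of the already-cited Markovian central limit theorem.
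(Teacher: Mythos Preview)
Your core approach is identical to the paper's: both obtain Proposition~\ref{P:CLT_Parameter} by specializing Proposition~\ref{P:CLT_longRange_WithParameter} to the test function $\phi(x_{1:t},\theta_{(t)})=\theta_{(t)}$, and both verify that the target $\bar{\phi}_t$ reduces to the posterior mean $\bar{\theta}_{(t)}$ by factoring $p(x_{1:t},\theta_{(t)}|y_{1:t})=p(x_{1:t}|\theta_{(t)},y_{1:t})p(\theta_{(t)}|y_{1:t})$ and integrating out $x_{1:t}$. The paper's proof is in fact considerably shorter than yours: it consists entirely of the sentence ``this proposition is essentially a special case of Proposition~\ref{P:CLT_longRange_WithParameter}'' together with the three-line computation of $\bar{\theta}_{(t)}$, and nothing more. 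The subtlety you raise in your second and third paragraphs --- that the Liu--West proposal for $\theta_{(t)}^{(N,i)}$ has $N$-dependent random location and scale parameters, so a literal fixed-proposal CLT does not apply, and that one should close this gap via induction, conditional CLTs, and Slutsky --- is not discussed in the paper's proof at all; the paper simply absorbs the issue into its appeal to Proposition~\ref{P:CLT_longRange_WithParameter} and the Chopin/Johansen framework without comment. So your proposal matches the paper's route exactly at the level of strategy, while being more scrupulous about a technical point the paper leaves tacit.
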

\begin{proof}
This proposition is essentially a special case of Proposition \ref{P:CLT_longRange_WithParameter} with $\phi(x_{1:t},\theta_{(t)})=\theta_{(t)}$. We notice that
 \begin{align}
  \bar{\theta}_{(t)}&=\int\theta_{(t)}p(x_{1:t}, \theta_{(t)}|y_{1:t})dx_{1:t}d\theta_{(t)}\nonumber\\
  &=\int\theta_{(t)}p(x_{1:t}|\theta_{(t)}, Y_{1:t})p(\theta_{(t)}|y_{1:t})dx_{1:t}d\theta_{(t)}\nonumber\\
  &=\int\theta p(\theta_{(t)}|y_{1:t})d\theta\nonumber
 \end{align}
which is the mean of the posterior distribution $p(\theta_{(t)}|y_{1:t})$, as claimed. 
\end{proof}
 The next natural question to ask is whether $\bar{\theta}_{(t)}$ is a consistent estimator of $\theta$ as $t\rightarrow\infty$. Let us recall from Subsection \ref{SS:OnLineEstimation} that choosing $\delta$ around $0.95-0.99$ implies for the tuning parameters $(\alpha,h)\approx(1,0)$. Subsequently, this means, for every fixed $t>0$, that the mean $m_t^{(N,i)}\approx\theta_{t}^{(N,i)}$ and the variance $h^2 V_t\approx 0$, which then implies that for every $t>0$, $\theta_{t}^{(N,i)}\approx \theta_{t-1}^{(N,i)}$.  Notice that if the tuning parameters $(\alpha,h)=(1,0)$ then the distribution $p(\theta_{(t)}|y_{1:t})$ coincides with the posterior distribution $p(\theta|y_{1:t})$, as the parameter does not involve artificially in time any more. Hence, in this case,
by Doob's consistency theorem, see for example Theorem 10.10 in \cite{VanDerVaart}, we would have that  the sequence of posterior measures $\mathbb{P}_{\bar{\theta}|y_{1:t}}$ is consistent under $\theta$ if the model is identifiable, i.e., if
$\mathbb{P}_{\theta_{1}}\neq \mathbb{P}_{\theta_{2}}$ for $\theta_{1}\neq\theta_{2}$. In other words, in such a case,
for every prior probability measure $\Pi$ on $\Theta$ the sequence of posterior measures
$\mathbb{P}_{\bar{\theta}|y_{1:t}}$ is  consistent for $\Pi-$almost every $\theta$. However, in the algorithm and for the purposes of dealing with the issue of degeneracy of the particles, we artificially evolve the unknown parameter which means that we take the tuning parameters to be $(\alpha,h)\approx (1,0)$ but not exactly equal to $(1,0)$. As we shall see from the simulation studies of Section \ref{S:Simulations}, this is sufficient to guarantee that the parameter is being consistently estimated, as expected.

\section{Simulation Results}\label{S:Simulations}

\subsection{Fractional ARIMA process}

The \emph{fractional ARIMA (AutoRegressive Integrated Moving Average)} process was proposed by Box and Jenkins, \cite{box}, in 1970 and has been very popular in applied time series. A fractional ARIMA($p, d, q$) process is formally defined as follows (due to Granger and Joyeux, \cite{granger}):
\begin{definition}
Let $\varphi(\cdot)$ and $\vartheta(\cdot)$ be polynomials of orders $p$ and $q$
respectively and $X_{t}$  a stationary process such that
\[
\varphi (B)(1-B)^{d}X_{t}=\vartheta(B) \eta_{t}.
\]
$d\in(-1/2,1/2)$ and and  $(\eta_{t})_{t \geq 0}$ is a sequence of iid variables with mean $0$ and variance $1$. Then, the process $\{X_{t}\}_{t\geq 0}$ is called a fractional
ARIMA($p,d,q$) process.
\end{definition}
In contrast to the classical ARIMA($p, d, q$) process, where  the parameter $d$ is an integer, in the fractional case $d$ is a real valued parameter with values between $(-1/2, 1/2)$. It is called the fractional integration parameter and is related to the Hurst index, $H$ in \eqref{eq:LRD}, via $d = H - \frac{1}{2}$. $B$ denotes the lag or backshift operator, and
 \[(1-B)^d = \sum_{k=0}^{\infty} \binom{d}{k} (-1)^k B^k,\]
where the sum is taken over an infinite number of indices. The fractional ARIMA process is long-range dependent when $d>0$, while the upper bound on $d$ is needed to ensure that the process is stationary.  More details regarding these models can be found in Beran \cite{beran}.

In our framework, we consider a state-space model in which the unobserved process is modeled by a Fractional ARIMA$(p, d, q)$ process. Specifically, the state-space model is defined as follows
\begin{equation}\label{eq:lmsv}
\begin{cases}
& Y_{t} = \sigma\left(\frac{X_t}{2}\right)\; \epsilon_t\\
& \varphi(B)\; (1-B)^{d} \;X_{t} = \vartheta(B) \;\eta_t,
\end{cases}
\end{equation}
where $\epsilon_t$ and $\eta_t$ are two independent iid sequences of Gaussian random variables and $\sigma(\cdot)$ is a known function.

\subsubsection{SISR for Fractional ARIMA process with known parameter}

We apply our algorithm to simulated data from an ARIMA(1, 0.3, 0) model with parameter $\varphi=0.8$.  That is
\begin{equation}\label{eq:arima}
\begin{cases}
&Y_{t} = |X_t|\; \epsilon_t\\
&(1-\varphi\; B)\; (1-B)^{d} \;X_{t} =\eta_t,
\end{cases}
\end{equation}
The simulated model is shown in  Figure \ref{figsim1}(a) and the estimated filter using the SISR algorithm is depicted in  Figure \ref{figsim1}(b). We choose as tuning parameters $\delta=0.98$ which then implies $(\alpha,h)=(0.9506, 0.096)$.

 \begin{figure}[!h]
 \centering
\begin{tabular}{cc}
 \includegraphics[scale=0.3]{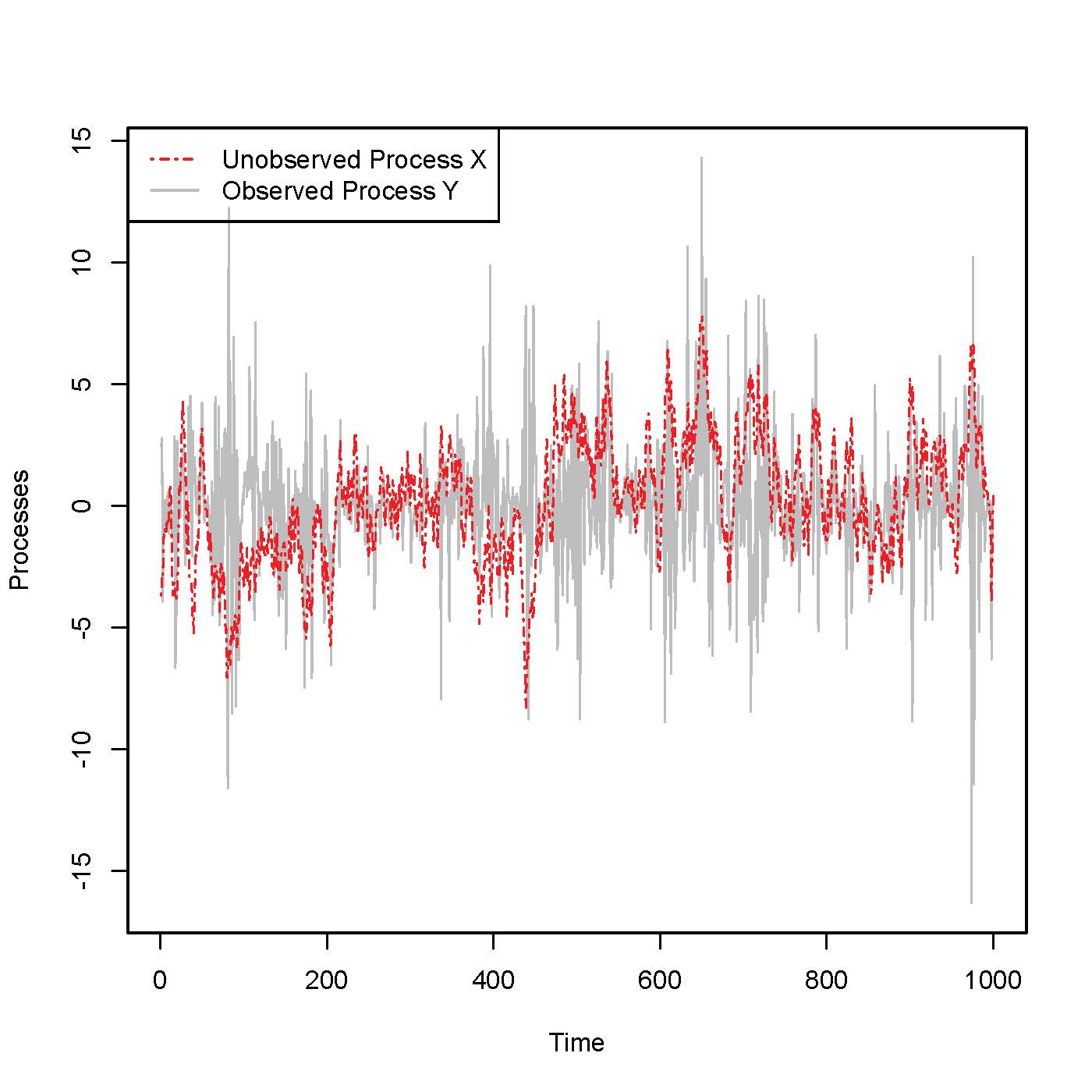}
 &
 \includegraphics[scale=0.3]{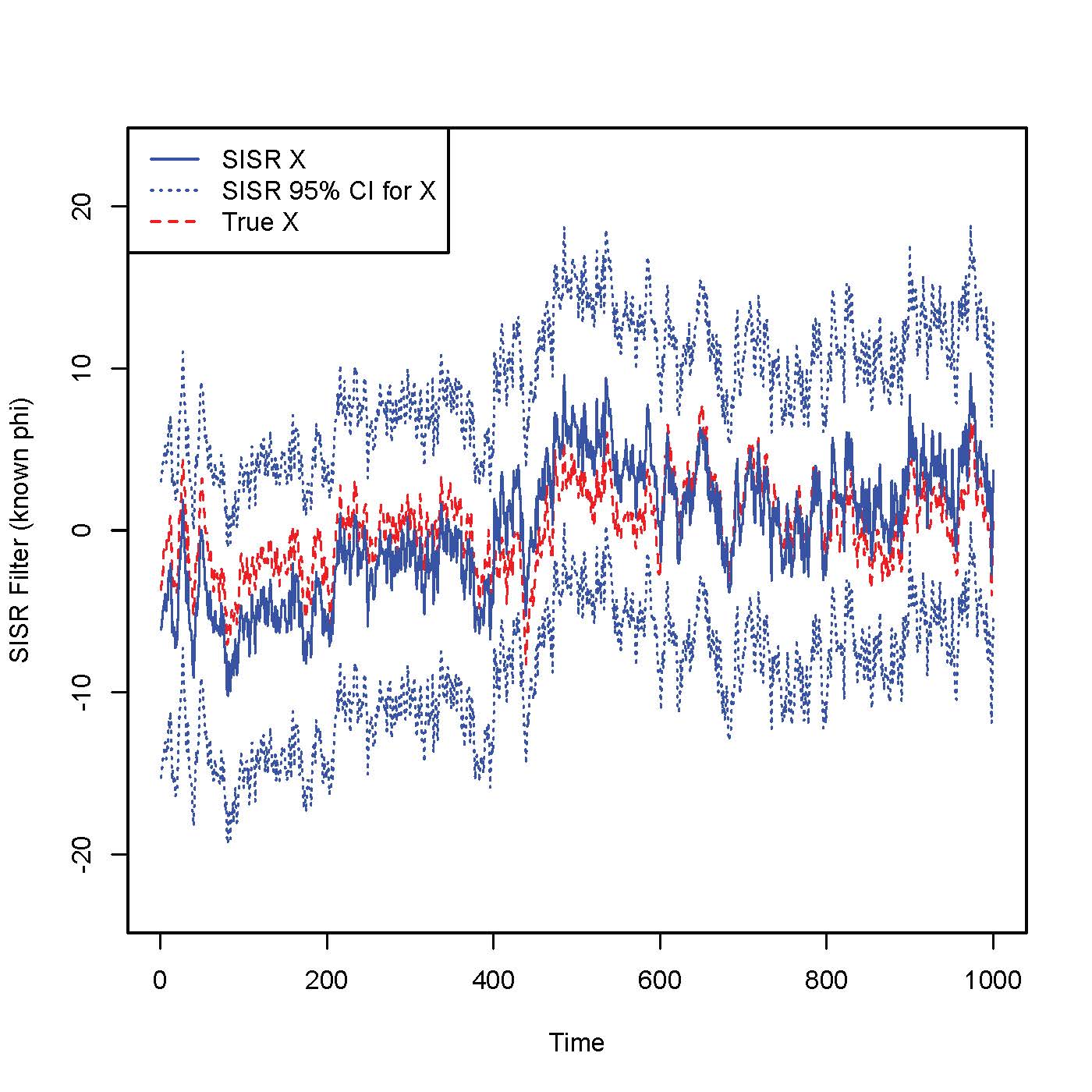}\\
 (a) Simulated Model &  (b) SISR Filter\\
 \end{tabular}
\caption{SISR filter for the fractional ARIMA(1, 0.3, 0) model.  Here, the  parameter $\varphi$  is assumed to be known, $\varphi=0.8$. } \label{figsim1}
 \end{figure}

From the SISR plot (Figure \ref{figsim1}b), we can see that the approximation of the unobserved process (solid  line) follows closely the true process (dashed line). In addition, the true data, all lie within the 95\% confidence interval (dotted lines) estimated using the SISR algorithm.
In our simulation study, we tried different values of $d$ and the results are similar.

\subsubsection{SISR for Fractional ARIMA process with unknown parameter}

Consider again model \eqref{eq:arima}, but now assume that the parameter $\varphi$ is unknown. Our goal is to estimate $\varphi$ using the SISR algorithm. The results are summarized in Figures \ref{figsim2} and \ref{figsim3}. From Figure \ref{figsim2}, we can see that the approximation of the unobserved process remains good, and the 95\% confidence interval still captures the process.

  \begin{figure}[!h]
  \centering
 \includegraphics[scale=0.3]{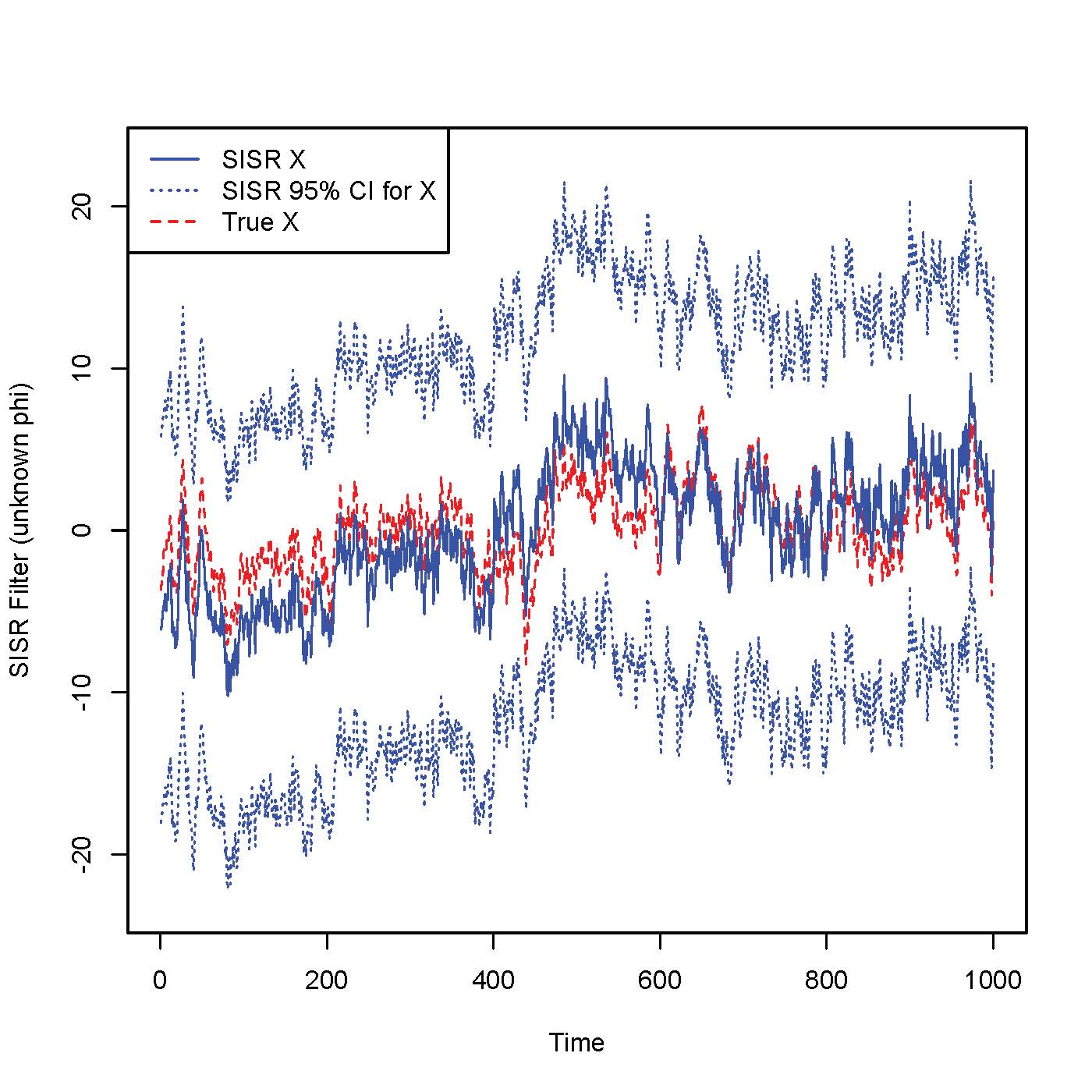}
\caption{SISR filter for the ARIMA(1, 0.3, 0) model. In this case, the  parameter $\varphi$  is assumed to be \textbf{unknown} and is estimated from the algorithm. }  \label{figsim2}
  \end{figure}

In Figure \ref{figsim3}, we investigate the convergence of the parameter to the true value. The estimated parameter $\varphi$ is slightly noisy, but it converges to the true value 0.8. The difference in the two graphs in Figure \ref{figsim3}, is that the second one has a significantly larger number of simulated particles, and it seems that this slightly improves the smoothness of the curve. 

In Figure \ref{figsim2}, we compare the empirical variance of the estimator for $N=500$ versus $N=2500$ across all $t\in[1,1000]$. It is clear that the variance decreases as N increases, but at the same time the variance does increase over time. The latter is consistent with the theoretical limiting variance as obtained in (\ref{Eq:VarianceEstimator}).

  \begin{figure}[!h]
  \centering
 \begin{tabular}{cc}
 \includegraphics[scale=0.3]{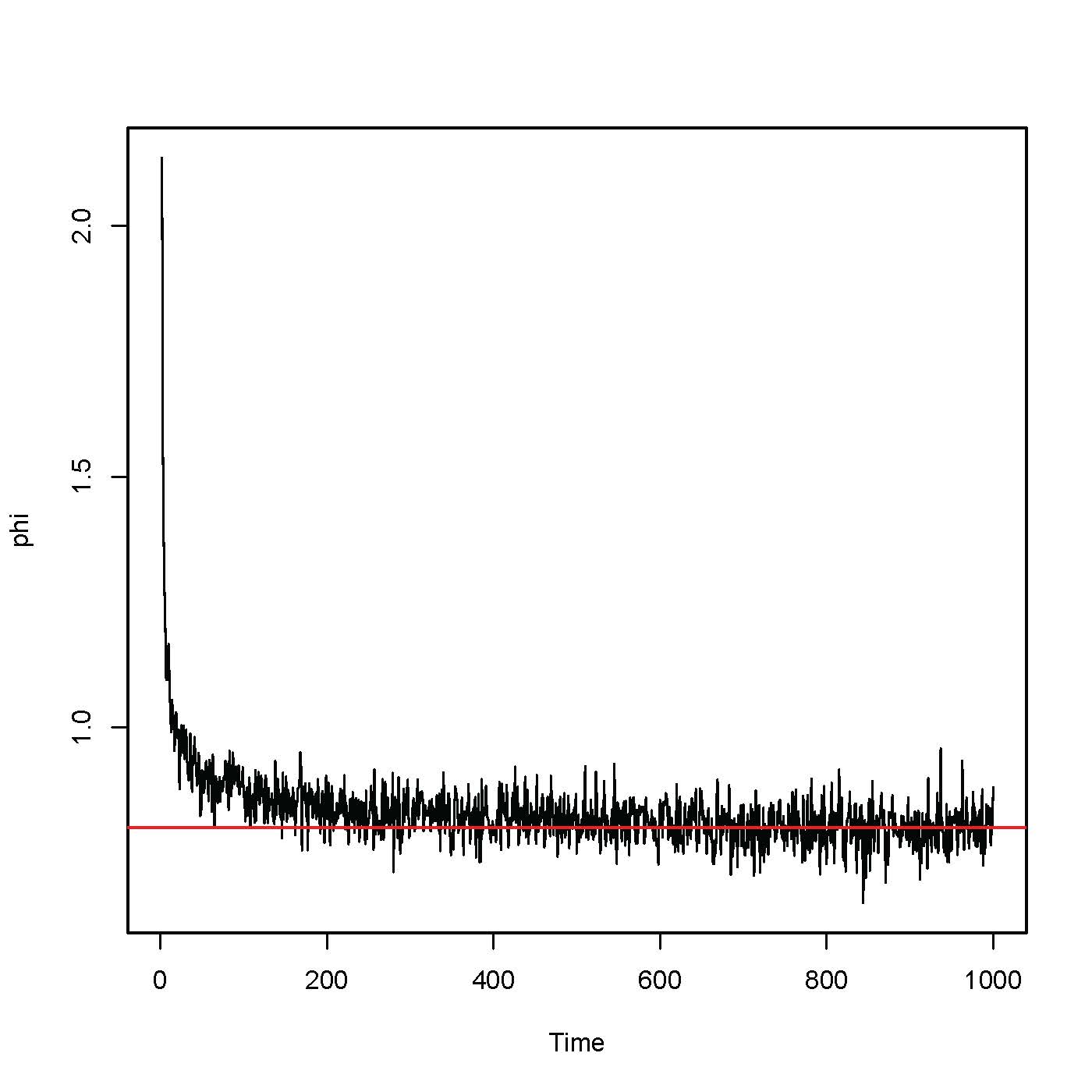}
 &
 \includegraphics[scale=0.3]{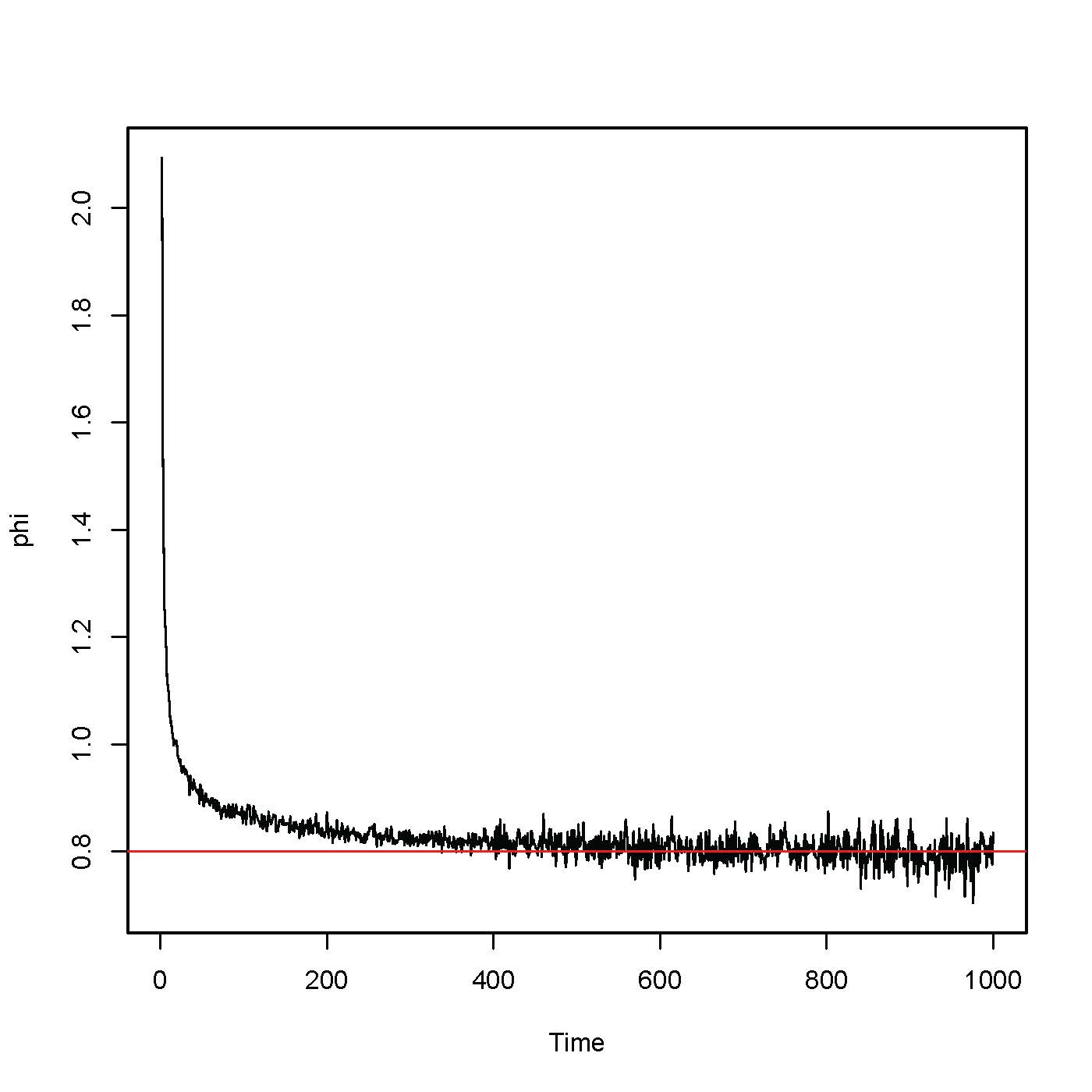}\\
 (a) Number of Particles $N$=500 &  (b) Number of Particles $N$=2,500
 \end{tabular}
\caption{Convergence of the estimated parameter as a function of time, for two different choices of number of particles.}  \label{figsim3}
  \end{figure}

  \begin{figure}[!h]
  \centering
 \begin{tabular}{cc}
 \includegraphics[scale=0.3]{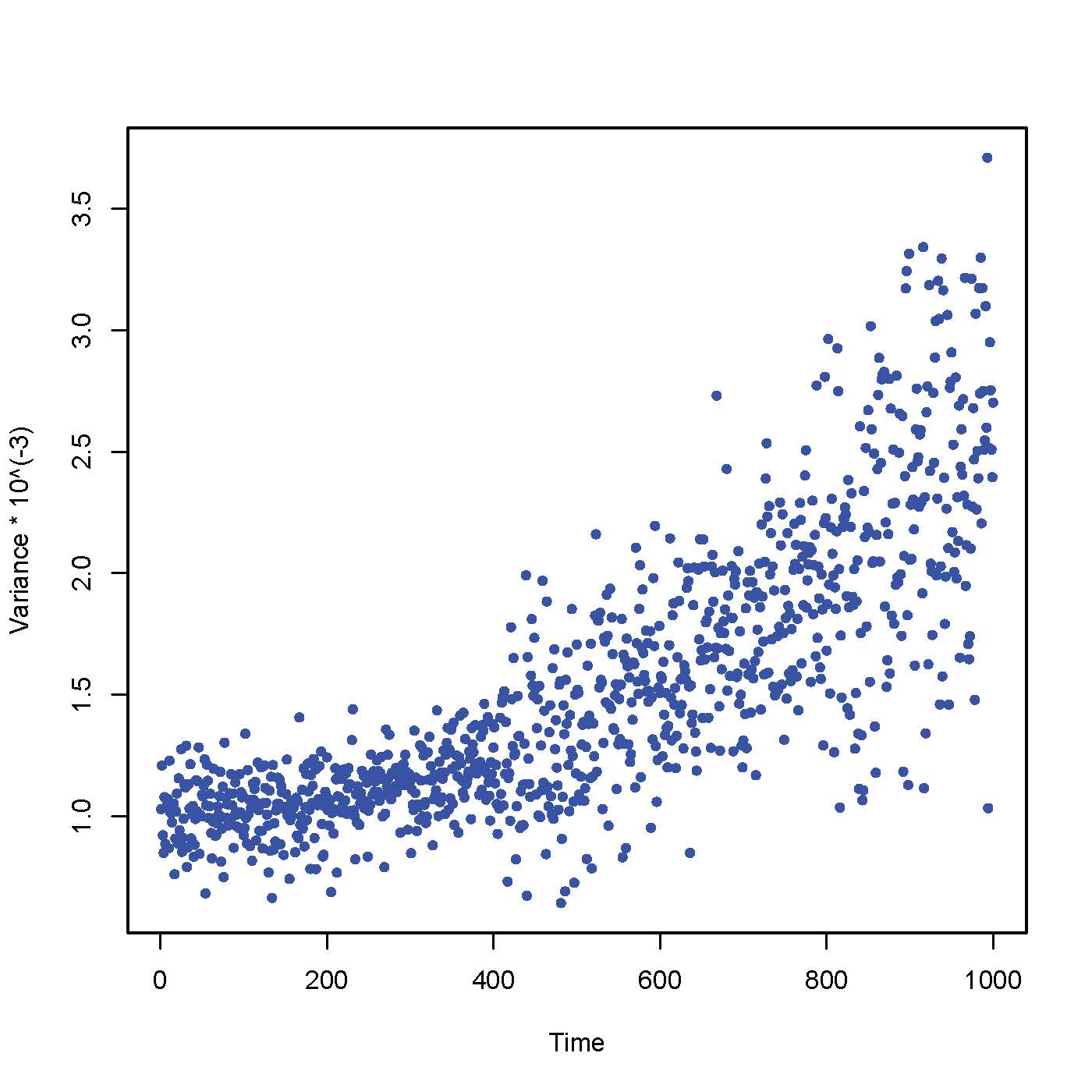}
 &
 \includegraphics[scale=0.3]{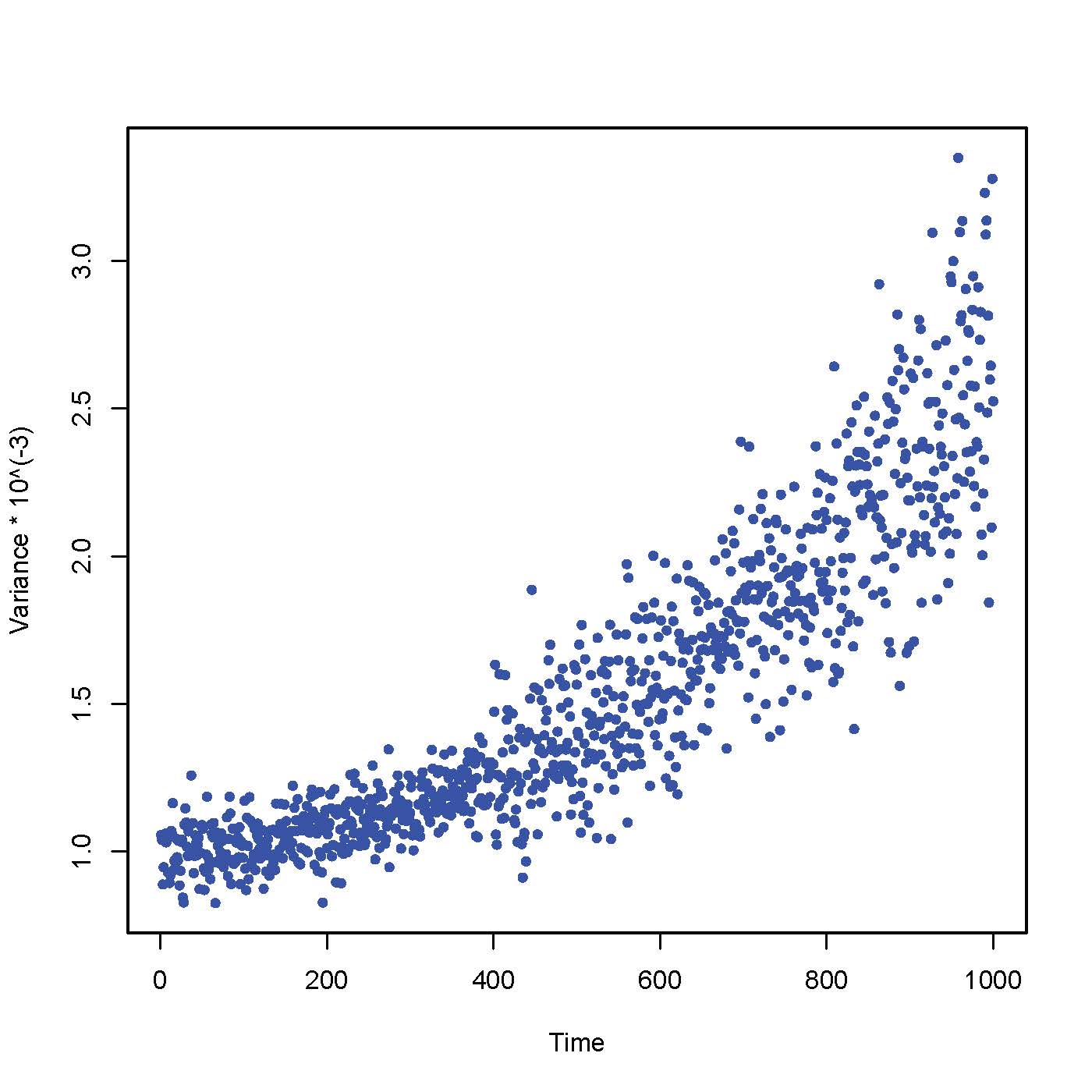}\\
 (a) Number of Particles $N$=500 &  (b) Number of Particles $N$=2,500
 \end{tabular}
\caption{Empirical Variance of the estimated parameters as a function of time for N=500 and N=2,500}  \label{figsim2}
  \end{figure}


\section{Application to S\&P 500 Data}

In this section, we apply our method to  real data. As an example, we are working with a long-range dependent state-space  model in finance. The observed process are the returns of the underlying asset (S\& P 500 index to be precise)  and the unobserved process is the asset's volatility. Based on the financial literature (\cite{BCL,Ha,chrono2}), we assume that the volatility process is long-range dependent, and the model we  focus on is the long memory stochastic volatility model in discrete time that is described by  \eqref{eq:lmsv}. This model was introduced simultaneously by  Breidt et al. \cite{BCL} and Harvey, \cite{Ha} in 1993.

 To further specify this model in practice, we need to determine the  order of the polynomials $\phi(\cdot)$ and $\theta(\cdot)$.  This is a  common task in time series analysis and for details we refer to Hamilton \cite{ham}. Based on a preliminary analysis, we choose to work  with a Fractional ARIMA($1,d,1$) model, which is also  in accordance to the model suggested by \cite{BBM}. To be precise, the model we will be working with is
\begin{equation*}
\begin{cases}
& Y_{t} = \sigma\left(\frac{X_t}{2}\right)\; \epsilon_t\\
& (1-\varphi B) \; (1-B)^{d} \;X_{t} = \vartheta \;\eta_{t-1} + \eta_t,
\end{cases}
\end{equation*}
where $\sigma(x) = \log x$.

The data set we consider contains daily returns of the S\&P 500 for one year, that is about 252 observations, starting in January 2010 until December 2010. 

One assumption that we made in the SMC algorithm  is that the parameter $d$  is known. However, when it comes to real data, this is something that we need to estimate. Here, we used the  Geweke and Porter-Hudak estimator, \cite{casas}, which yields $d=0.2$. Then, we apply the  SISR algorithm to estimate the remaining unknown parameters of the model  $\varphi$ and $\vartheta$. We choose as tuning parameters $\delta=0.986$ which then implies $(\alpha,h)=(0.965, 0.068)$.

The algorithm has two outputs. The first one is the distribution of the unobserved volatility, which is  given in Figure \ref{figsim5}, using 500 trajectories, and the second one is the estimated vector of parameters, which are plotted as a function of time in Figure \ref{figsim6}.

  \begin{figure}[!h]
  \centering
 \includegraphics[scale=0.3]{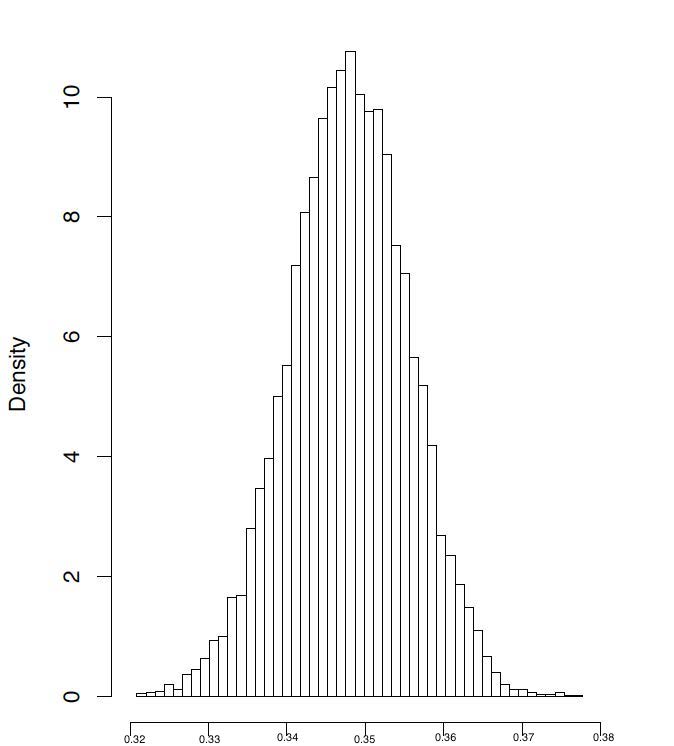}
\caption{Histogram of the empirical distribution of the unobserved volatility. As a reference, the implied volatility for the same period was 0.355.}  \label{figsim5}
  \end{figure}

  \begin{figure}[!h]]
  \centering
  \begin{tabular}{cc}
 \includegraphics[scale=0.3]{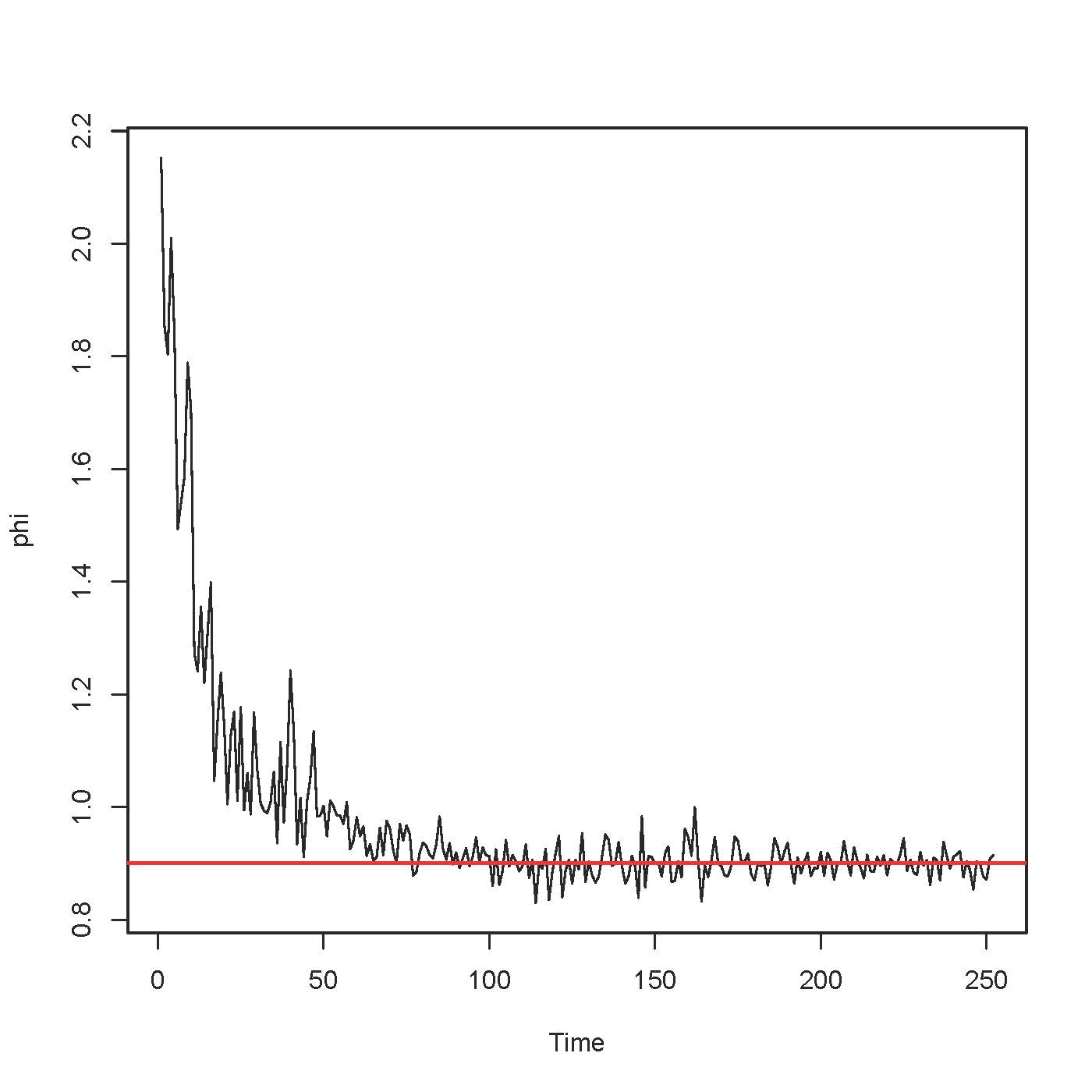} &
 \includegraphics[scale=0.3]{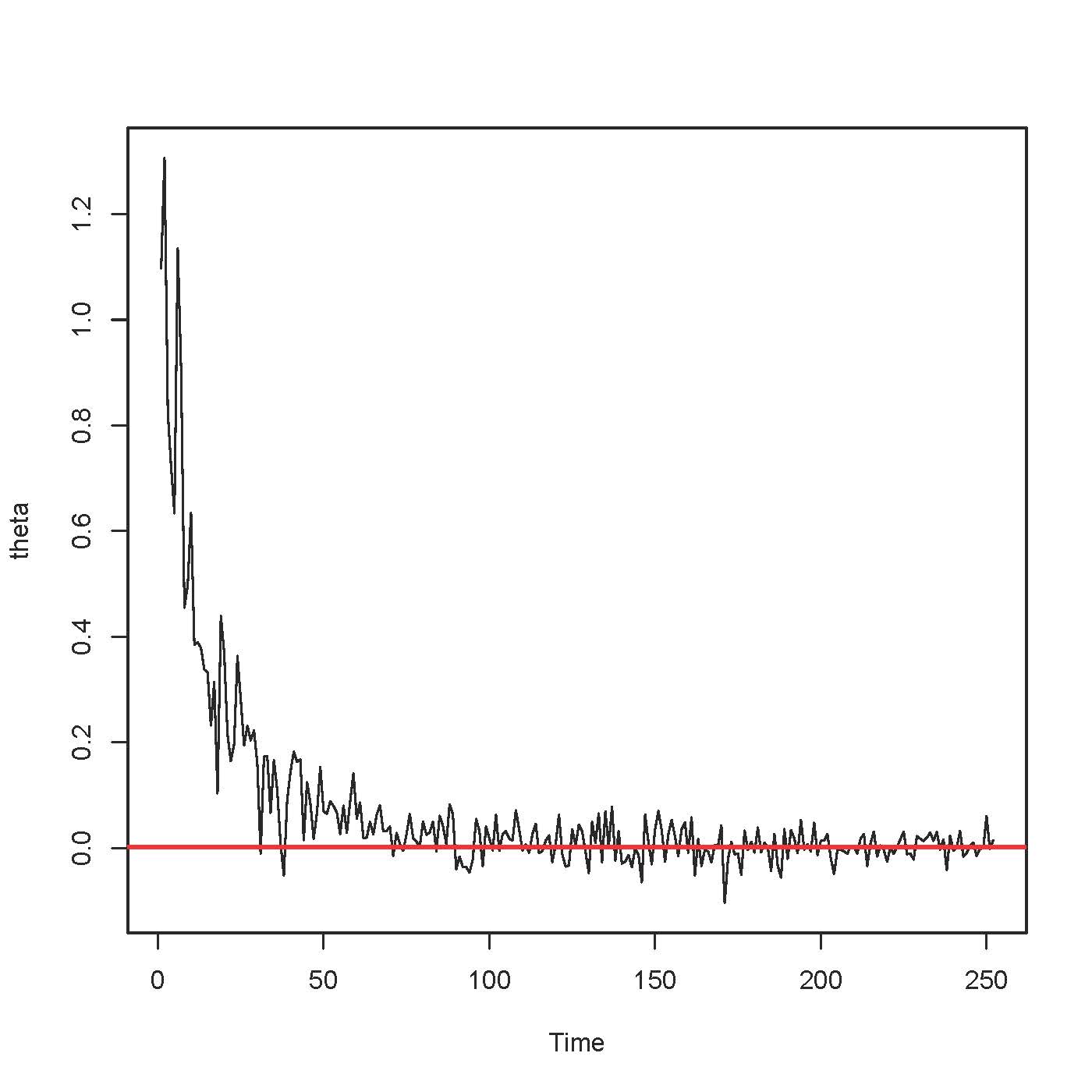}\\
 (a) Estimator of $\varphi$ & (b) Estimator of $\vartheta$\\
  \includegraphics[scale=0.3]{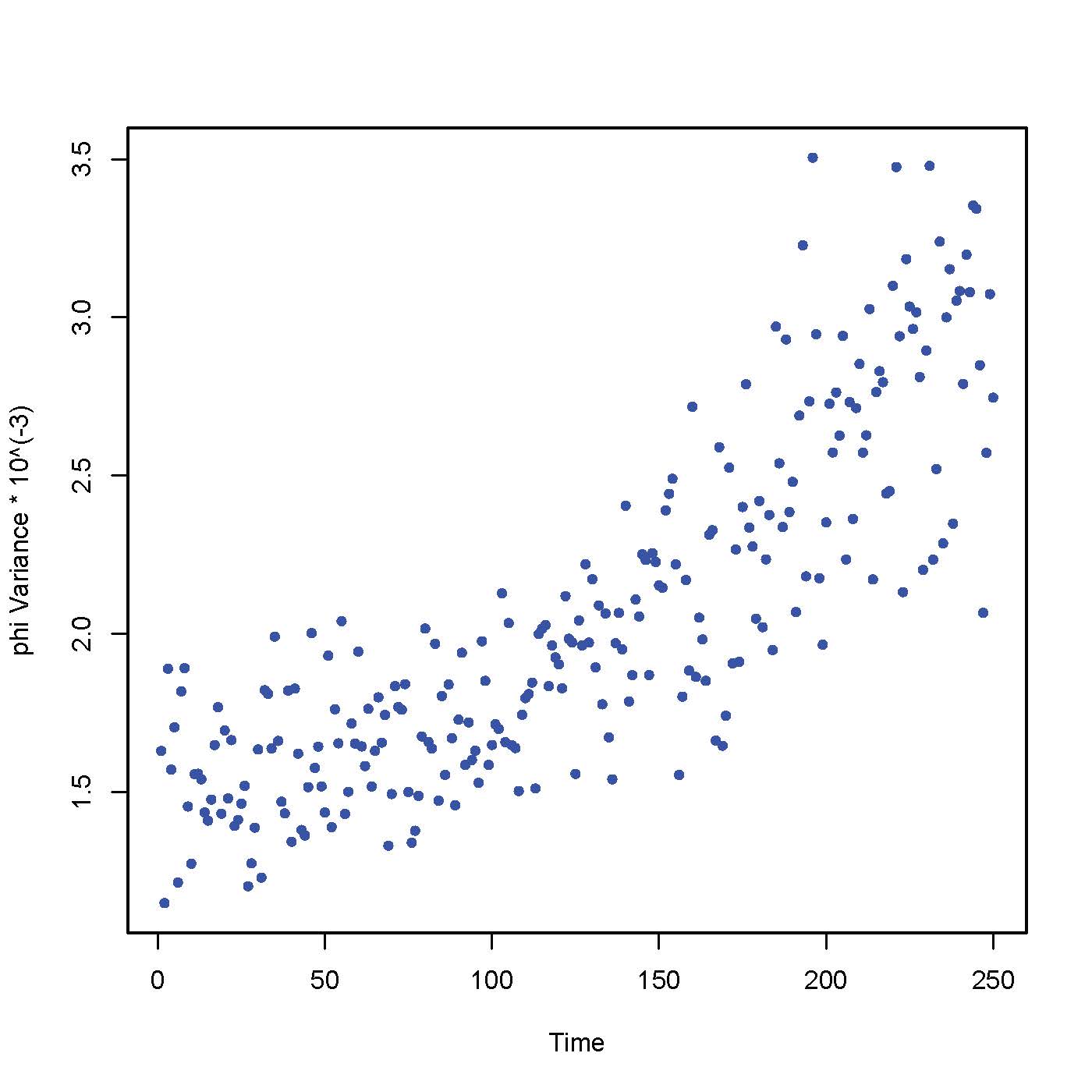} &
 \includegraphics[scale=0.3]{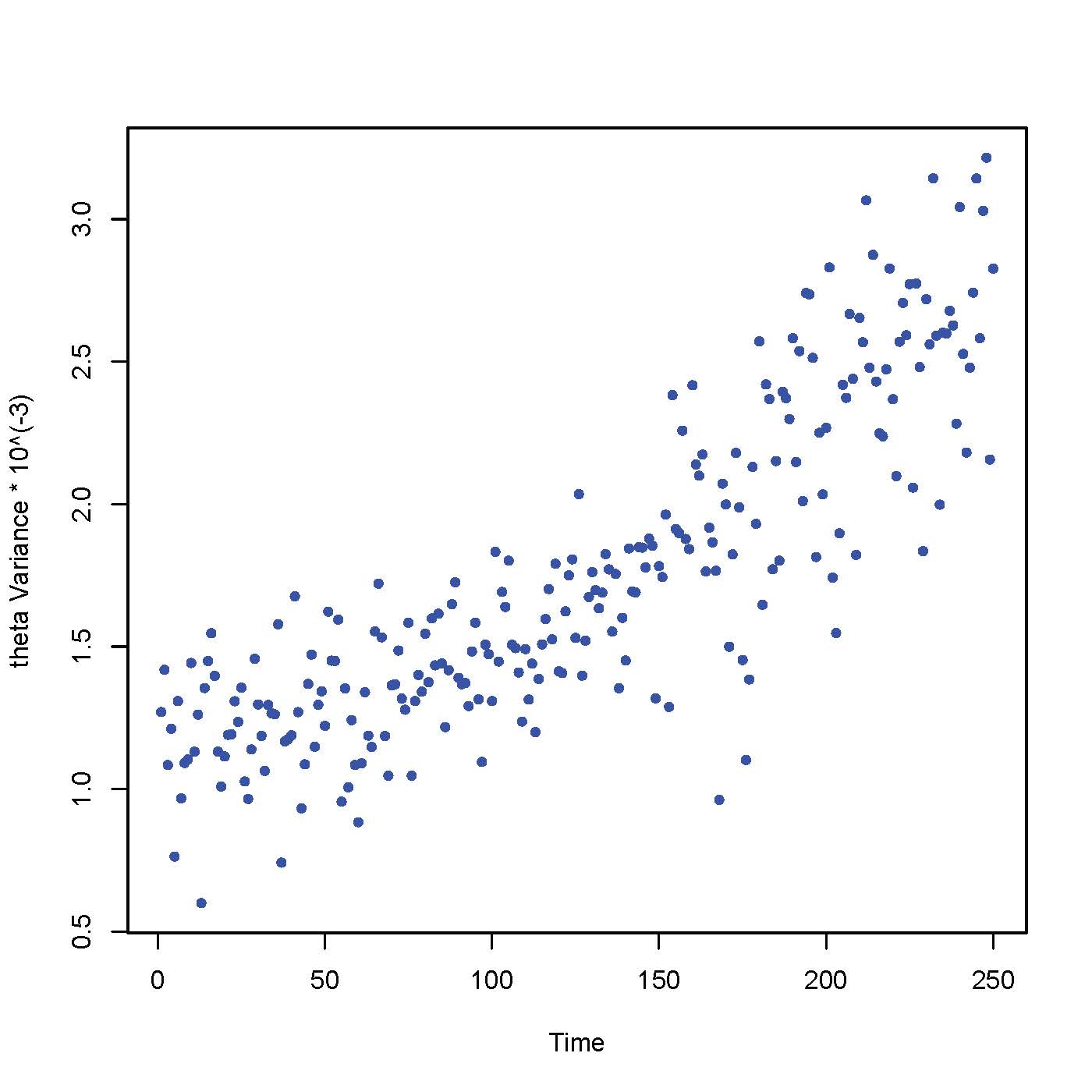}\\
 (c) Empirical Variance of $\hat{\varphi}$ & (d) Empirical Variance of $\hat{\vartheta}$
 \end{tabular}
\caption{Estimators for $\varphi$ and $\vartheta$ for the S\&P 500 data set. Solid lines represent the estimated values for $\hat{\varphi}=0.842$ and $\hat{\theta}=0.01$ for $\phi$ and $\theta$ respectively. }  \label{figsim6}
  \end{figure}

Using the model parameters we estimated, we also do an out-of-sample prediction of the values of the underlying asset, which is shown in Figure \ref{figpred}. By doing an one-step prediction each time, we forecast 20 daily values of the index. The 95\% confidence intervals are computed using boostrap. In Figure \ref{figpred} we also present the empirical variance for the estimators $\hat{\varphi}$ and $\hat{\theta}$. In Figure \ref{figresids}, we also present the residuals of the fitted model.

  \begin{figure}[!h]
  \centering
 \includegraphics[scale=0.3]{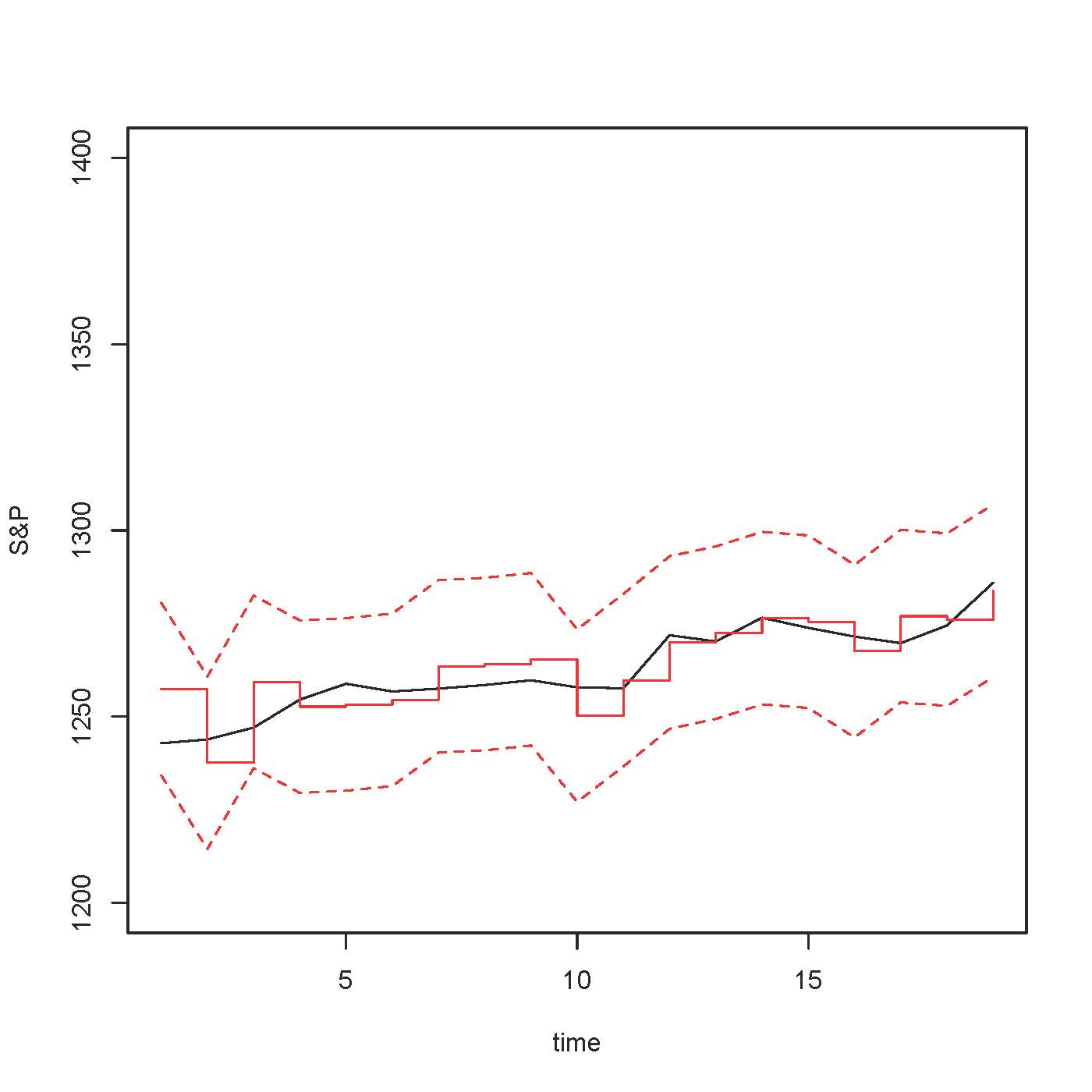}
\caption{Out-of-Sample Prediction of the S\& P 500 values. The (black) solid continuous line are the true S\& P 500 values, while the (red) solid step  line is our estimation. The (red) dotted lines form the 95\% prediction interval.}\label{figpred}
  \end{figure}

  \begin{figure}[!h]]
  \centering
  \begin{tabular}{cc}
 \includegraphics[scale=0.3]{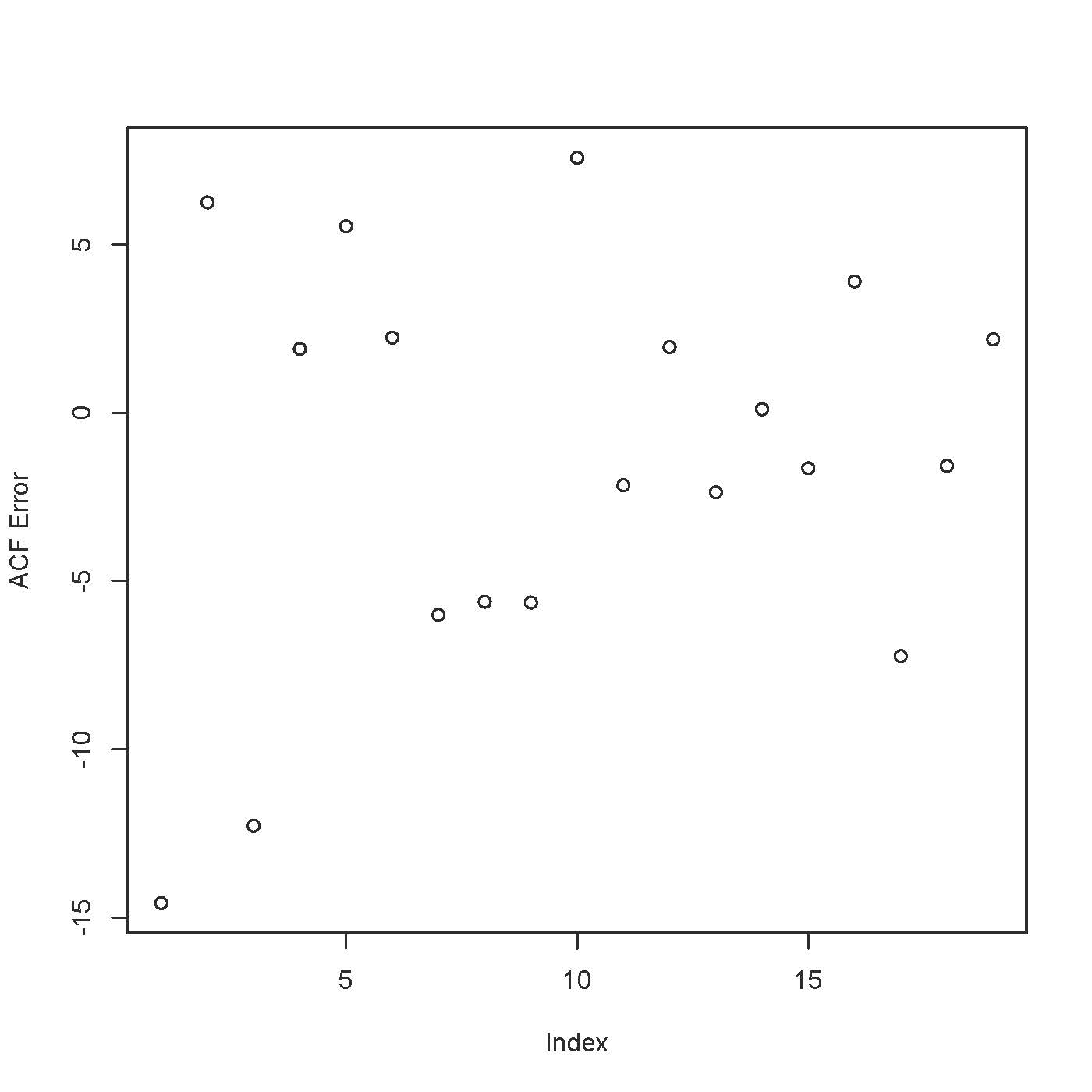} &
 \includegraphics[scale=0.3]{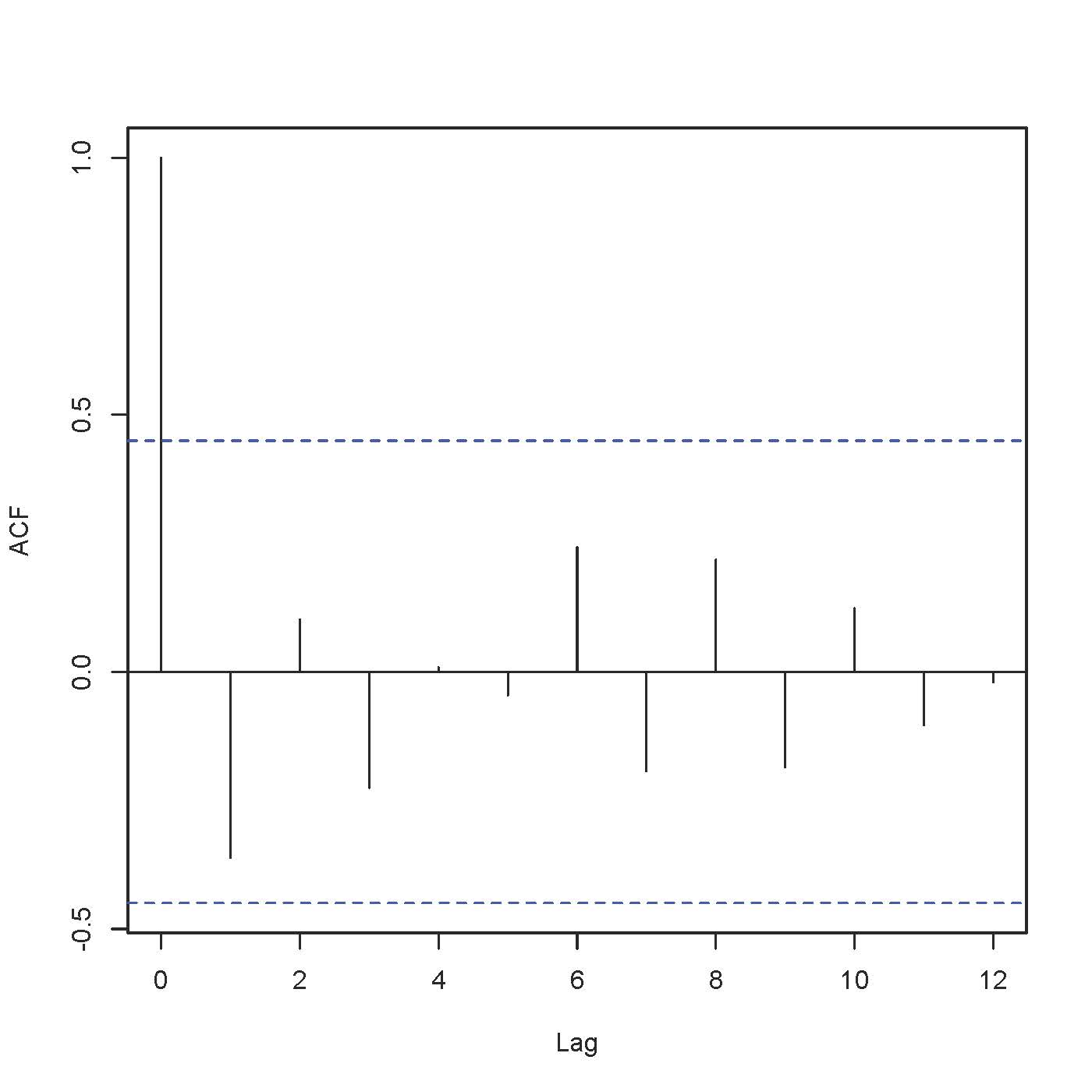}\\
 (a)  Standard Residuals  & (b) ACF of the Residuals
 \end{tabular}
\caption{Residuals of the fitted fractional ARIMA (1, 0.2, 1) model with $\varphi=0.842$ and $\vartheta=0.01$.}  \label{figresids}
  \end{figure}

\section{Conclusion}

To summarize, in this article we extended the standard SISR algorithm to incorporate the case that the observations are long-range dependent. Our findings show that the results are very close to the case that the observations are independent or Markov. However, the main drawback of this method is the computational time that is required to perform the iterations. Since we need to take into account, and technically speaking to store all past values of the trajectory, this increases the computational time and complexity of the method. In addition, by naturally extending existing results in the literature, we proved that the filter converges to the true distribution of the unobserved process.

Our second outcome, was the development of an SISR algorithm that along with the estimation of the unobserved distribution of the hidden process, also estimated  unknown model parameters. Our approach was dynamic, in the sense that the parameter was regarded as ``time-varying'' and thus the parameter estimators were updated at every step of the algorithm. We also showed that the proposed estimators for the unknown parameter are consistent and asymptotically normal and we corroborated these results with a simulation study.

There are quite a few open problems that we would like to investigate in the future. The first one is to study ways to improve the computational efficiency of the algorithm. In our approach, we used all the history of the trajectory to run the algorithm, which severely affected the computational efficiency, but it would be interesting to investigate if a ``window" approach would provide us with a reasonable estimator for the filter and/or the parameter, and possibly quantify the loss that one might have by doing so in terms of accuracy.

In addition, one question that we did not address in this paper, is what happens with the long memory parameter in practice. In our approach, we assumed that $d$ (or equivalently $H$) is known (given or estimated from the data). However, it is an open question how one would consistently estimate the memory parameter in the scenario that the long-range dependent process is hidden.

The goal of this first paper on the topic is to lay down the algorithm, its properties and to understand the main issues that the presence of memory brings. In this paper, we formulated the algorithm and established some baseline theoretical properties. We also performed numerical experiments using both simulation data and real data, in order to test the performance of the proposed algorithm in practice. We plan to investigate the open computational and theoretical issues mentioned above in a systematic way in future works.

\end{document}